\newcommand{\vars}{\mathit{vars}\xspace}
\newcommand{\qbfmodels}{\models_{t}\xspace}
\newcommand{\levels}{\mathit{levels}\xspace}
\newcommand{\eabs}{\mathit{Abs}\xspace}
\newcommand{\unitprop}{\mathrel{\vdash\kern-.65em_{^{_{1}}}}\xspace}
\newcommand{\unitQprop}{\mathrel{\vdash\kern-.75em_{^{_{1\forall}}}}}
\newcommand{\nunitQprop}{\mathrel{\nvdash\kern-.75em_{^{_{1\forall}}}}}
\newcommand{\outercl}{\mathsf{OC}\xspace}
\newcommand{\outerres}{\mathsf{OR}\xspace}
\newcommand{\qior}{\mathsf{QIOR}\xspace}
\newcommand{\newqior}{\mathsf{QIOR}^{+}\xspace}
\newcommand{\qrat}{\mathsf{QRAT}\xspace}
\newcommand{\at}{\mathsf{AT}\xspace}
\newcommand{\qat}{\mathsf{QAT}\xspace}
\newcommand{\newqrat}{\mathsf{QRAT}^{+}\xspace}
\newcommand{\hqspre}{\textsf{HQSpre}\xspace}
\newcommand{\qute}{\textsf{Qute}\xspace}
\DeclareRobustCommand{\qrattool}{\textsf{QRATPre\raisebox{.4ex}{\relsize{-3}{+}}}\xspace}
\newcommand{\textqrate}{\mathsf{QRATE}\xspace}
\newcommand{\textqrata}{\mathsf{QRATA}\xspace}
\newcommand{\textqratu}{\mathsf{QRATU}\xspace}
\newcommand{\textnewqrate}{\mathsf{QRATE}^{+}\xspace}
\newcommand{\textnewqrata}{\mathsf{QRATA}^{+}\xspace}
\newcommand{\textnewqratu}{\mathsf{QRATU}^{+}\xspace}
\newcommand{\texteur}{\mathsf{EUR}\xspace}
\newcommand{\lquplusres}{LQU$^+$\!-resolution\xspace}
\newcommand{\bloqqer}{\textsf{Bloqqer}\xspace}
\newcommand{\aigsolve}{\textsf{AIGSolve}\xspace}
\newcommand{\caqe}{\textsf{CAQE}\xspace}
\newcommand{\rareqs}{\textsf{RAReQS}\xspace}
\newcommand{\depqbf}{\textsf{DepQBF}\xspace}
\newcommand{\UR}{\mathit{UR}}
\newcommand{\var}[1]{\mathsf{var}(#1)}
\newcommand{\prefix}{\Pi}
\newcommand{\clauset}{\psi}
\newcommand{\qclauset}{\phi}
\newcommand{\quant}[2]{\mathsf{Q}(#1,#2)}
\newcommand{\satequiv}{\equiv_{\mathit{sat}}}
\newcommand{\treemodelequiv}{\equiv_{\mathit{t}}}
\newcommand{\FormulasL}{\Phi_{L}}
\newcommand{\FormulasC}{\Phi_{C}}
\newcommand{\prefixL}{\prefix_{L}}
\newcommand{\clausetL}{\clauset_{L}}
\newcommand{\prefixC}{\prefix_{C}}
\newcommand{\clausetC}{\clauset_{C}}
\newcommand{\formulasQUParity}{\Phi_{Q}}
\newcommand{\prefixQUParity}{\prefix_{Q}}
\newcommand{\clausetQUParity}{\clauset_{Q}}
\author{Florian Lonsing \and Uwe Egly}
\institute{Research Division of Knowledge Based Systems \\ 
  Institute of Logic and Computation,  
  TU Wien, Austria \\ 
  \email{firstname.lastname@tuwien.ac.at}
}
\begin{document}

\newcommand{\doctitle}{$\newqrat$: Generalizing $\qrat$ by a More Powerful QBF Redundancy Property}

\title{\doctitle
\thanks{Supported by the Austrian Science Fund (FWF) under grant
  S11409-N23. This article will appear in the \textbf{proceedings} of IJCAR
  2018, LNCS, Springer, 2018.}}

\maketitle

\begin{abstract}
The $\qrat$ (quantified resolution asymmetric tautology) proof system
simulates virtually all inference rules applied in state of the art quantified
Boolean formula (QBF) reasoning tools. It consists of rules to rewrite a QBF
by adding and deleting clauses and universal literals that have a certain
redundancy property. To check for this redundancy property in $\qrat$,
propositional unit propagation (UP) is applied to the quantifier free, i.e.,
propositional part of the QBF.  We generalize the redundancy property
in the $\qrat$ system by QBF specific UP (QUP). QUP extends UP by the
universal reduction operation to eliminate universal literals from clauses. We
apply QUP to an abstraction of the QBF where certain universal quantifiers are
converted into existential ones. This way, we obtain a generalization of
$\qrat$ we call $\newqrat$.  The redundancy property in
$\newqrat$ based on QUP is more powerful than the one in $\qrat$ based on UP. We
report on proof theoretical improvements and experimental results to
illustrate the benefits of $\newqrat$ \mbox{for QBF preprocessing.}
\end{abstract}


\section{Introduction}
\label{sec:intro}

In practical applications of propositional logic satisfiability
(SAT), it is necessary to establish correctness guarantees on the results produced by SAT
solvers by proof checking~\cite{DBLP:journals/cacm/HeuleK17}. The DRAT (deletion resolution asymmetric
tautology)~\cite{DBLP:conf/sat/WetzlerHH14} approach has become 
state of the art to generate and check propositional proofs.

The logic of quantified Boolean formulas (QBF) extends propositional logic by
existential and universal quantification of the propositional
variables. Despite the PSPACE-completeness of QBF
satisfiability checking, QBF technology is relevant in 
practice due to the potential succinctness of
QBF encodings~\cite{DBLP:conf/tacas/FaymonvilleFRT17}.

DRAT has been lifted to QBF to obtain the $\qrat$ (quantified RAT) proof
system~\cite{DBLP:conf/cade/HeuleSB14,DBLP:journals/jar/HeuleSB17}. 
$\qrat$ allows to represent and
check (un)satisfiability proofs of QBFs and compute Skolem function
certificates of satisfiable QBFs. 
The $\qrat$
system simulates virtually all inference rules applied in state of the art QBF
reasoning tools, such as Q-resolution~\cite{DBLP:journals/iandc/BuningKF95}
including its variant long-distance
Q-resolution~\cite{DBLP:conf/sat/KieslHS17,DBLP:conf/iccad/ZhangM02}, and
expansion of universal variables~\cite{DBLP:conf/sat/Biere04a}.

A $\qrat$ proof of a QBF in prenex CNF 
consists of a sequence of
inference steps that rewrite the QBF by adding and deleting clauses and
universal literals that have the \emph{$\qrat$ redundancy
property}. Informally, checking whether a clause $C$ has $\qrat$ amounts to
checking whether all possible resolvents of $C$ on a literal $l \in
C$ (under certain restrictions) are \emph{propositionally implied} by the
quantifier-free CNF part of the QBF. The principle of redundancy checking by
inspecting resolvents originates from the RAT property in propositional
logic~\cite{DBLP:conf/cade/JarvisaloHB12} and was generalized to first-order
logic in terms of \emph{implication modulo
resolution}~\cite{DBLP:conf/cade/Kiesl017}. Instead of a complete  
(and thus computationally hard) propositional implication check on a
resolvent, 
the $\qrat$ system relies on an incomplete check by \emph{propositional unit
propagation} (UP). Thereby, it is checked whether UP can derive the empty
clause from the CNF augmented by the negated resolvent. Hence  
redundancy checking in $\qrat$ is unaware of the quantifier structure, which is entirely ignored in UP.

We generalize redundancy checking in $\qrat$ by making it aware of the quantifier
structure of a QBF. To this end, we 
check the redundancy of resolvents based on \emph{QBF specific UP} (QUP). It
extends UP by the \emph{universal reduction} (UR)
operation~\cite{DBLP:journals/iandc/BuningKF95} and is a polynomial-time 
procedure like UP. UR is central in resolution
based QBF
calculi~\cite{DBLP:conf/sat/BalabanovWJ14,DBLP:journals/iandc/BuningKF95} as
it shortens individual clauses by eliminating  universal literals depending on
the quantifier structure. 
We apply QUP to \emph{abstractions} of the QBF where certain universal quantifiers are
converted into existential ones. The purpose of abstractions is that if a resolvent is found redundant by QUP on
the abstraction, then it is also redundant \mbox{in the original QBF.} 
 
Our contributions are as follows: 
(1) by applying QUP and QBF abstractions
instead of UP, we obtain a \emph{generalization of the $\qrat$ system} which
we call $\newqrat$. In contrast to $\qrat$, redundancy checking in $\newqrat$
is aware of the quantifier structure of a QBF.  
We show that (2) the
redundancy property in $\newqrat$ based on QUP is \emph{more powerful} than
the one in $\qrat$ based on UP. $\newqrat$ can detect redundancies 
which $\qrat$ cannot.
 As a formal foundation, 
we introduce (3) a \emph{theory of QBF abstractions} used in $\newqrat$. 
Redundancy elimination by $\newqrat$ or $\qrat$ can lead to (4) \emph{exponentially
shorter proofs} in certain resolution based QBF calculi, which we point out by
a concrete example. 
Note that here we do not study the power of $\qrat$ or $\newqrat$ as proof
systems themselves, but the impact of redundancy elimination. 
Finally, we report on experimental
results (5) to illustrate the benefits of redundancy elimination by $\newqrat$
and $\qrat$ for QBF preprocessing. Our implementation of $\newqrat$ and
$\qrat$ for preprocessing is the first one reported in the literature. 


\section{Preliminaries}
\label{sec:prelims}

We consider QBFs $\qclauset := \prefix.\clauset$ in \emph{prenex conjunctive
  normal form (PCNF)}    
with a \emph{quantifier prefix} $\prefix := Q_1B_1 \ldots Q_nB_n$ and a
quantifier free CNF $\clauset$ not containing tautological clauses.
The prefix consists of \emph{quantifier blocks} $Q_iB_i$, where $B_i$ are
\emph{blocks} (i.e., sets) of propositional variables and $Q_i \in \{\forall,
\exists\}$ are \emph{quantifiers}. We have $B_i \cap B_j = \emptyset$, 
$Q_i
\not= Q_{i+1}$ and $Q_n = \exists$. 
The CNF $\clauset$ is defined precisely over the variables $\vars(\qclauset)
= \vars(\clauset) := B_1 \cup \ldots \cup B_n$  
in $\prefix$ so that all variables
are quantified, i.e., $\qclauset$ is \emph{closed}.
 The \emph{quantifier}
$\quant{\prefix}{l}$ of literal $l$ is  $Q_i$ if the variable $\var{l}$ of $l$ appears in
$B_i$. 
The set of variables in a clause $C$ is $\vars(C) := \{x \mid l \in C, \var{l} = x\}$.
A literal $l$ is \emph{existential} if $\quant{\prefix}{l} = \exists$ and
\emph {universal}
if $\quant{\prefix}{l} = \forall$.
If $\quant{\prefix}{l} = Q_i$
and $\quant{\prefix}{k} = Q_j$,
then $l \leq_\prefix k$ iff $i \leq j$. We
extend the ordering $\leq_\prefix$ to an arbitrary but fixed ordering
on the variables in every block $B_i$.

An \emph{assignment} $\tau: \vars(\qclauset) \rightarrow
\{\top, \bot\}$ maps the variables  
of a QBF $\qclauset$ to truth constants $\top$ (\emph{true}) or 
$\bot$ (\emph{false}). Assignment $\tau$ is \emph{complete} if it assigns every variable
in $\qclauset$, otherwise $\tau$ is \emph{partial}. By $\tau(\qclauset)$ we denote \emph{$\qclauset$ under
$\tau$}, where each occurrence of variable $x$ in $\qclauset$ is replaced
by $\tau(x)$ and $x$ is removed from the prefix of $\qclauset$, followed by
 propositional simplifications on $\tau(\qclauset)$. 
We consider $\tau$ as a set of literals such that, for
some variable $x$, $x \in \tau$ if $\tau(x) = \top$ and $\bar x \in
\tau$ if $\tau(x) = \bot$.
 
An \emph{assignment tree}~\cite{DBLP:journals/jar/HeuleSB17} $T$ of a
QBF $\qclauset$ is a \emph{complete binary tree} of depth $|\vars(\qclauset)|
+ 1$ where the internal (non-leaf) nodes of each level are
associated with a variable of $\qclauset$. An internal node is universal
(existential) if it is associated with a
universal (existential) variable. The order of variables along every path 
in $T$ respects the extended order $\leq_\prefix$ of
the prefix $\prefix$ of $\qclauset$. An internal node associated
with variable $x$ has two outgoing edges pointing to its children: one
labelled with $\bar x$ and another one labelled with $x$, denoting the
assignment of $x$ to false and true, respectively. Each path $\tau$ in $T$ 
from the root to an internal node (leaf) represents a partial
(complete) assignment. 
A leaf at the end of $\tau$ is labelled by $\tau(\qclauset)$,
i.e., the value of $\qclauset$ under $\tau$. An internal node
associated with an existential (universal) variable is labelled with $\top$
iff one (both) of its children is (are) labelled with $\top$. The QBF
$\qclauset$ is \emph{satisfiable} (\emph{unsatisfiable}) iff the root of $T$ is labelled
with $\top$ ($\bot$).

Given a QBF $\qclauset$ and its assignment tree $T$, a 
subtree $T'$ of $T$ is a \emph{pre-model}~\cite{DBLP:journals/jar/HeuleSB17} of $\qclauset$ if (1) the
root of $T$ is the root of $T'$, (2) for every universal node in $T'$ both
children are in $T'$, and (3) for every existential node in $T'$ exactly one of
its children is in $T'$. A pre-model $T'$ of $\qclauset$ is a \emph{model}~\cite{DBLP:journals/jar/HeuleSB17} 
of $\qclauset$, denoted by $T' \qbfmodels \qclauset$, if each node in $T'$ is labelled with
$\top$. 
A QBF $\qclauset$ is \emph{satisfiable} iff
it has a model. Given a QBF $\qclauset$ and one of its \emph{models} $T'$,
$T''$ is a \emph{rooted subtree} 
of $T'$ ($T'' \subseteq T'$) if $T''$ has the same root as $T'$ and the leaves
of $T''$ are a subset of the leaves of $T'$.

We consider CNFs $\clauset$ defined over a set $B$
of variables without an explicit 
quantifier prefix.  A \emph{model} of a CNF $\clauset$ is a 
model $\tau$ of the QBF $\exists B. \clauset$ which consists only of the
single path $\tau$.   
We write $\tau \models \clauset$ if $\tau$ is a model of $\clauset$. 
For CNFs $\clauset$ and $\clauset'$, $\clauset'$ is \emph{implied} by
$\clauset$ ($\clauset \models \clauset'$) if, for all $\tau$, it holds that if
$\tau \models \clauset$ then 
$\tau \models \clauset'$. Two CNFs $\clauset$ and $\clauset'$ are
\emph{equivalent} ($\clauset \equiv \clauset'$), iff $\clauset \models
\clauset'$ and $\clauset' \models \clauset$. 
We define notation to explicitly refer to QBF models. For QBFs $\qclauset$ and $\qclauset'$, $\qclauset'$ is \emph{implied} by
$\qclauset$ ($\qclauset \qbfmodels \qclauset'$) if, for all $T$, it holds that
if $T \qbfmodels \qclauset$ then $T \qbfmodels \qclauset'$. 
QBFs $\qclauset$ and $\qclauset'$ are
\emph{equivalent} ($\qclauset \treemodelequiv \qclauset'$) iff $\qclauset \qbfmodels
\qclauset'$ and $\qclauset' \qbfmodels \qclauset$, and \emph{satisfiability equivalent} 
($\qclauset \satequiv \qclauset'$) iff
$\qclauset$ is satisfiable whenever $\qclauset'$ is
satisfiable. Satisfiability equivalence of CNFs is defined analogously and
\mbox{denoted by the same symbol '$\satequiv$'.}


\section{The Original $\qrat$ Proof System} \label{sec:trad:qrat}

Before we generalize $\qrat$, we recapitulate the
original proof system~\cite{DBLP:journals/jar/HeuleSB17} and emphasize
that redundancy checking in $\qrat$ is unaware of quantifier structures.

\begin{definition}[\cite{DBLP:journals/jar/HeuleSB17}]
The \emph{outer clause} of clause $C$ on literal $l \in C$ with respect to
prefix $\prefix$ is the clause $\outercl(\prefix, C, l) := \{ k \mid k
\in C, k \leq_\prefix l, k \not = l\}$.
\end{definition}

The outer clause $\outercl(\prefix, C, l) \subset C$ of $C$ on $l
\in C$ contains only literals that are smaller
than or equal to $l$ in the variable ordering of prefix $\prefix$, excluding $l$.

\begin{definition}[\cite{DBLP:journals/jar/HeuleSB17}]
Let $C$ be a clause with $l \in C$ and $D$ be a clause with $\bar l \in D$
occurring in QBF $\prefix. \clauset$. The \emph{outer resolvent} of $C$ with
$D$ on $l$ with respect to $\prefix$ is the clause $\outerres(\prefix, C, D,
l) := (C \setminus \{l\}) \cup \outercl(\prefix, D, \bar l)$.
\end{definition}

\begin{example}
\label{ex:outerres}
Given $\qclauset := \exists x_1 \forall u \exists x_2. (C \wedge D)$
with $C := (x_1 \vee u \vee x_2)$ and $D := (\bar x_1 \vee \bar u \vee
\bar x_2)$, we have $\outerres(\prefix, C, D, x_1) = ( u \vee x_2) $,
$\outerres(\prefix, C, D, u) = ( x_1 \vee \bar x_1 \vee x_2) $,
$\outerres(\prefix, C, D, x_2) = ( x_1 \vee u \vee \bar x_1
\vee \bar u)$, and
$\outerres(\prefix, D, C, \bar u) = ( x_1 \vee \bar x_1 \vee \bar x_2)$.
Computing
outer resolvents is \mbox{asymmetric since $\outerres(\prefix, C, D, u) \not =
\outerres(\prefix, D, C, \bar u)$.}
\end{example}

\begin{definition}[\cite{DBLP:journals/jar/HeuleSB17}]
\label{def:orig:qior}
Clause $C$ has property \emph{$\qior$ (quantified implied outer
  resolvent)} on literal $l
\in C$ with respect to QBF $\prefix. \clauset$ iff $\clauset \models \outerres(\prefix, C, D, l)$ for all $D
\in \clauset$ with $\bar l \in D$. 
\end{definition}

Property $\qior$ relies on checking whether every possible outer
resolvent $\outerres$ of some clause $C$ on a literal is redundant by checking if $\outerres$ is \emph{propositionally implied} by the \emph{quantifier-free CNF}
$\clauset$ of the given QBF $\prefix. \clauset$. If $C$ has $\qior$ on literal
$l \in C$ then, depending on whether $l$ is existential or universal and
side conditions, either 
$C$ is redundant and can be removed from QBF $\prefix. \clauset$ or
$l$ is redundant and can be removed
from $C$, respectively, resulting in a \emph{satisfiability-equivalent} QBF.

\begin{theorem}[\cite{DBLP:journals/jar/HeuleSB17}]
\label{thm:orig:qior:clause:soundness}
Given a QBF $\qclauset := \prefix. \clauset$ and a clause $C \in \clauset$ with $\qior$ on
an \emph{existential} literal $l \in C$ with respect to QBF
$\qclauset' := \prefix. \clauset'$ where $\clauset' := \clauset \setminus \{C\}$. 
Then $\qclauset \satequiv
\qclauset'$. 
\end{theorem}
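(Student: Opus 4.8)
By the definition of $\satequiv$ given above, the statement to prove is that $\qclauset$ is satisfiable whenever $\qclauset'$ is satisfiable. The converse implication is immediate, since $\clauset' = \clauset \setminus \{C\}$ and hence every model of $\qclauset$ satisfies all clauses of $\clauset \supseteq \clauset'$ on each of its paths and is therefore already a model of $\qclauset'$; so I note it in passing and concentrate on the nontrivial direction. The plan is to start from a model of the weaker formula $\qclauset'$ and build a model of the stronger formula $\qclauset$ that additionally satisfies $C$ on every path, and this is exactly the step where the $\qior$ property of $C$ on the existential literal $l$ must be used.

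First I would fix a model $T'$ of $\qclauset'$ and write $x := \var{l}$, assuming without loss of generality that $l = x$ (the case $l = \bar l$ being symmetric). I would then change only the existential strategy for $x$, keeping the value chosen at every other node as in $T'$. Concretely, at each $x$-node, reached by some assignment $\sigma$ to the variables before $x$, I set $x$ so that $l$ becomes true precisely when (i) the outer clause $\outercl(\prefix, C, l)$ is falsified by $\sigma$ and (ii) the subtree of $T'$ rooted at that node still contains a leaf on which $C$ is falsified; otherwise I keep the value from $T'$. Condition (ii) depends only on $\sigma$ together with the fixed inner strategy, so the recipe is a genuine function of the variables before $x$ and yields a well-formed assignment tree $T$ whose inner choices are copied verbatim from $T'$; consequently each path of $T$ agrees with the corresponding path of $T'$ except possibly in the value of $x$.

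Next I would check that every leaf of $T$ satisfies $\clauset = \clauset' \cup \{C\}$. That $C$ is satisfied everywhere follows from a short case distinction on the recipe: if $\outercl(\prefix, C, l)$ is true then $C$ is satisfied by an outer literal; if it is false but $l$ was already true, or if no leaf below falsifies $C$, then $C$ is satisfied as in $T'$ by $l$ or by an inner literal; and on exactly the nodes where I flip, $l$ is now true. The delicate point is that the clauses of $\clauset'$ must remain satisfied, since flipping $x$ makes $\bar l$ false and could endanger a clause $D \in \clauset'$ with $\bar l \in D$. This is the main obstacle, and it is where $\qior$ is decisive. At a node where I flip, condition (ii) supplies a leaf $\tau_0$ below it that falsifies $C$, and I would argue that then $\outercl(\prefix, D, \bar l)$ must already be true under $\sigma$ for every such $D$: indeed $\qior$ gives $\clauset' \models \outerres(\prefix, C, D, l)$ with $\outerres(\prefix, C, D, l) = (C \setminus \{l\}) \cup \outercl(\prefix, D, \bar l)$, and evaluating this implied clause on the model leaf $\tau_0$ the part $C \setminus \{l\}$ is entirely false (its outer literals by (i), its inner literals because $\tau_0$ falsifies $C$), so $\outercl(\prefix, D, \bar l)$ cannot be false under $\tau_0$ as well, whence it is true under $\sigma$. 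Since $\outercl(\prefix, D, \bar l) \subseteq D$ consists only of literals before $x$, it is unaffected by flipping $x$, and $D$ therefore stays satisfied through its outer part.

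Combining these observations, after the modification every leaf of $T$ satisfies both $C$ and all of $\clauset'$, so all leaves, and by upward propagation all internal nodes, are labelled $\top$, giving $T \qbfmodels \qclauset$ and hence $\qclauset \satequiv \qclauset'$. The step I expect to demand the most care is precisely the safety of the flip for the clauses $D$ containing $\bar l$: the argument works only because I flip $x$ at a node that genuinely needs it (condition (ii)), and it is exactly there that $\qior$ forces every relevant $\outercl(\prefix, D, \bar l)$ to be already satisfied by $\sigma$. The secondary point that needs to be stated cleanly is that conditions (i) and (ii) depend solely on the variables before $x$, so that the construction is a legitimate existential strategy and $T$ is a valid assignment tree.
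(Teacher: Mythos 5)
Your proposal is correct and follows essentially the same route as the paper: the paper's own proof (the appendix proof of Theorem~\ref{thm:newqior:clause:soundness}, which specializes to this statement since $\qior$ implies $\newqior$, together with Lemma~\ref{lem:outerres:satisfied}) likewise takes a model $T'$ of $\qclauset'$, flips the assignment of $x = \var{l}$ on the paths where $C$ is falsified while keeping prefix and suffix assignments, and uses the implied outer resolvents, evaluated on model paths, to conclude that every $D$ with $\bar l \in D$ remains satisfied through its outer clause under the prefix assignment. Your only deviations are presentational: you organize the flip per $x$-node via your conditions (i) and (ii) rather than per path, which keeps the modified tree a well-formed pre-model somewhat more explicitly, and you inline the content of Lemma~\ref{lem:outerres:satisfied} (with the $C$-falsifying leaf $\tau_0$ as witness) instead of invoking it as a separate lemma.
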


\begin{theorem}[\cite{DBLP:journals/jar/HeuleSB17}]
\label{thm:orig:qior:literal:soundness}
Given a QBF $\qclauset_0 := \prefix. \clauset$ and $\qclauset :=
\prefix. (\clauset \cup \{C\})$ where $C$ has $\qior$ on a \emph{universal}
literal $l \in C$ with respect to $\qclauset_0$. Let $\qclauset' :=
\prefix. (\clauset \cup \{C'\})$ with $C' := C \setminus \{l\}$. Then $\qclauset
\satequiv \qclauset'$.
\end{theorem}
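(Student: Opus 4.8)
The plan is to prove the two directions of $\qclauset \satequiv \qclauset'$ separately. Since $C' = C \setminus \{l\} \subseteq C$, as a disjunction $C'$ implies $C$, hence $\clauset \cup \{C'\} \models \clauset \cup \{C\}$. Consequently, every assignment tree that is a model of $\qclauset'$ is already a model of $\qclauset$: each of its root-to-leaf paths satisfies $\clauset \cup \{C'\}$ and therefore $\clauset \cup \{C\}$, so $\qclauset'$ satisfiable implies $\qclauset$ satisfiable. All the work is in the converse direction, where I must turn a model of the weaker formula $\qclauset$ into a model of the stronger $\qclauset'$, and this is exactly where the $\qior$ hypothesis on the universal literal $l$ enters.

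For the converse I start from a model $T \qbfmodels \qclauset$ and modify only the existential choices made below the universal nodes associated with $u := \var{l}$; the complete binary branching at universal nodes is left intact, so the result is again a legal pre-model. Fix such a node $v$ with outer assignment $\sigma$, the partial assignment to the variables $\leq_\prefix$-before $u$ along the path to $v$; its two children root the subtree $T_l$ in which $u$ is set so that $l$ is true and the subtree $T_{\bar l}$ in which $\bar l$ is true. On $T_{\bar l}$ every path makes $l$ false while satisfying $C$, hence satisfies $C' = C \setminus \{l\}$ already, so I keep $T_{\bar l}$ unchanged; the only paths of $T$ that can violate $C'$ are those in some $T_l$ on which $C$ is satisfied solely through $l$. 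At each such node $v$ I replace the existential strategy of $T_l$ by a copy of the one in $T_{\bar l}$, keeping $u$ set for $l$, so that each new path $\tau$ mirrors a path $\tau'$ of $T_{\bar l}$ and differs from it only in the value of $u$. Because $C'$ does not mention $u$ and holds on $T_{\bar l}$, every such $\tau$ now satisfies $C'$; and every clause of $\clauset$ not containing $\bar l$, as well as every clause containing $l$, is immediate under $\tau$.

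The one remaining obligation, and the crux of the argument, is that each clause $D \in \clauset$ with $\bar l \in D$ is still satisfied by the grafted path $\tau$, even though $\tau$ sets $\bar l$ to false whereas its mirror $\tau'$ may have relied on $\bar l$. Here I invoke $\qior$: for every such $D$ we have $\clauset \models \outerres(\prefix, C, D, l) = C' \cup \outercl(\prefix, D, \bar l)$. The key observation is that grafting is performed at $v$ only because some path of the original $T_l$ violated $C'$; that path satisfies $\clauset$ with $C'$ false, so the implication forces $\outercl(\prefix, D, \bar l)$ to be true on it. Since $\outercl(\prefix, D, \bar l)$ consists only of literals $\leq_\prefix$-before $u$, its truth is fixed by $\sigma$ alone, so in fact $\sigma$ satisfies $\outercl(\prefix, D, \bar l)$ for every $D$ with $\bar l \in D$. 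As $\outercl(\prefix, D, \bar l) \subseteq D$ and $\tau$ extends $\sigma$, this outer literal keeps $D$ satisfied along $\tau$, and hence every path of the modified tree satisfies $\clauset \cup \{C'\}$, yielding a model of $\qclauset'$. I expect this last step, namely showing that a single $C'$-violating path already pins down the outer assignment well enough to re-satisfy all $\bar l$-clauses after grafting, to be the main obstacle, and the place where the observation that only clauses containing $\bar l$ matter, and only through their $\leq_\prefix$-outer parts, must be handled with care.
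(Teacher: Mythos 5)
Your proof is correct and takes essentially the same route as the paper's own argument (given there for the generalized $\newqior$ version, Theorem~\ref{thm:newqior:literal:soundness} with Lemma~\ref{lem:outerres:satisfied} in the appendix): the nontrivial direction replaces the $l$-subtree at nodes of $\var{l}$ by a copy of the sibling $\bar l$-subtree, and the clauses containing $\bar l$ are re-satisfied because their outer parts are already true under the prefix assignment. Your key observation---that a $C'$-violating path, satisfying $\clauset$ and hence the outer resolvent, forces $\sigma \models \outercl(\prefix, D, \bar l)$ for every $D$ with $\bar l \in D$---is exactly the content of the paper's Lemma~\ref{lem:outerres:satisfied}, which you prove inline directly from the propositional implication in the definition of $\qior$.
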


Note that in Theorems~\ref{thm:orig:qior:clause:soundness}
and~\ref{thm:orig:qior:literal:soundness} clause $C$ is actually removed
from the QBF for the check whether $C$ has $\qior$ on a literal. 
Checking propositional implication ($\models$) as in Definition~\ref{def:orig:qior} is co-NP
hard and hence intractable. Therefore, in practice a polynomial-time incomplete implication check based on
propositional unit propagation (UP) is applied. The use of UP is central in the $\qrat$ proof system. 

\begin{definition}[propositional unit propagation, UP]
\label{def:prop}
 For a CNF
$\clauset$ and clause $C$, let $\clauset \wedge \overline{C}
\unitprop \emptyset$ denote the fact that \emph{propositional unit propagation
  (UP)} applied to
$\clauset \wedge \overline{C}$ produces the empty
clause, where $\overline{C}$ is the conjunction of the negation of all the literals in
$C$. If $\clauset \wedge \overline{C} \unitprop \emptyset$ then 
we write $\clauset \unitprop C$ to denote that
\emph{$C$ can be derived from
  $\clauset$ by UP} (since $\clauset \models C$).
\end{definition}

\begin{definition}[\cite{DBLP:journals/jar/HeuleSB17}]
\label{def:at}
Clause $C$ has property $\at$ \emph{(asymmetric tautology)} with
respect to a CNF $\clauset$ iff $\clauset \unitprop C$.
\end{definition}

$\at$ is a propositional clause redundancy property that is used in the
$\qrat$ proof system to check whether outer
resolvents are redundant, thereby replacing propositional implication ($\models$) in
Definition~\ref{def:orig:qior} by unit propagation ($\unitprop$) as follows.

\begin{definition}[\cite{DBLP:journals/jar/HeuleSB17}]
\label{def:orig:qrat}
Clause $C$ has property $\qrat$ \emph{(quantified resolution
  asymmetric tautology)} on literal $l \in C$ with respect to QBF
$\prefix. \clauset$ iff, 
for all $D \in \clauset$ with $\bar l \in D$, the outer resolvent 
$\outerres(\prefix, C, D,l)$ has $\at$ with
respect to CNF $\clauset$.
\end{definition}

\begin{example}
Consider $\qclauset := \exists x_1 \forall u \exists x_2. (C \wedge
D)$ with $C := (x_1 \vee u \vee x_2)$ and $D := (\bar x_1 \vee \bar u
\vee \bar x_2)$ from Example~\ref{ex:outerres}. $C$ does not have
$\at$ with respect to CNF $D$, but $C$ has $\qrat$ on $x_2$ with
respect to QBF $\exists x_1 \forall u \exists x_2. (D)$ since
$\outerres(\prefix, C, D, x_2) = ( x_1 \vee u \vee \bar x_1 \vee \bar
u)$ has $\at$ with respect to CNF $D$.
\end{example}

$\qrat$ is a restriction of $\qior$, i.e., a clause that has $\qrat$ also has
$\qior$ but not necessarily vice versa. 
Therefore, the soundness of removing redundant
clauses and literals based on $\qrat$ follows right from
Theorems~\ref{thm:orig:qior:clause:soundness}
and~\ref{thm:orig:qior:literal:soundness}. 

Based on the $\qrat$ redundancy property, the $\qrat$ proof
system~\cite{DBLP:journals/jar/HeuleSB17} consists of rewrite rules to
eliminate redundant clauses, denoted by $\textqrate$,
to add
redundant clauses, denoted by $\textqrata$,
and to
eliminate redundant universal literals, denoted by $\textqratu$.
In a \emph{$\qrat$ satisfaction proof
  (refutation)}, a QBF is reduced to the empty formula
(respectively, to a formula containing the empty clause) by applying
the rewrite rules. The $\qrat$ proof systems has an additional rule to
eliminate universal literals by \emph{extended universal reduction}
($\texteur$).  We do not present $\texteur$ because it is not
affected by our generalization of $\qrat$, which we define in the
following. Observe that $\qior$ and $\at$ (and hence also $\qrat$)
are based on \emph{propositional} implication
($\models$) and unit propagation ($\unitprop$), i.e., the
quantifier structure of the given QBF is not exploited.


\section{$\newqrat$: A More Powerful QBF Redundancy Property} \label{sec:new:qrat}

We make redundancy checking of outer resolvents in $\qrat$ aware of the
quantifier structure of a QBF. To this end, we generalize $\qior$ and $\at$ by
replacing propositional implication ($\models$) and unit propagation
($\unitprop$) by QBF implication $(\qbfmodels)$ and QBF unit propagation,
respectively. Thereby, we obtain a more general and more powerful notion of
the $\qrat$ redundancy property, which we call $\newqrat$.  

First, in
Proposition~\ref{prop:qior:qbf:model:equiv} we point out a property of $\qior$
(Definition~\ref{def:orig:qior}) which is due to the following result from
related work~\cite{DBLP:conf/cp/SamulowitzDB06}: if we attach a quantifier
prefix $\prefix$ to equivalent CNFs $\clauset$ and $\clauset'$, then the
resulting QBFs are equivalent.

\begin{proposition}[\cite{DBLP:conf/cp/SamulowitzDB06}]
\label{prop:sat:equiv:implies:qbf:equiv}
Given CNFs $\clauset$ and $\clauset'$ such that $\vars(\clauset) =
\vars(\clauset')$ and a quantifier prefix $\prefix$ defined precisely over 
$\vars(\clauset)$. If $\clauset \equiv \clauset'$ then $\prefix. \clauset \treemodelequiv \prefix. \clauset'$.
\end{proposition}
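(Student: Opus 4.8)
The plan is to prove the stronger fact that $\prefix.\clauset$ and $\prefix.\clauset'$ have \emph{exactly the same set of models}, from which $\prefix.\clauset \treemodelequiv \prefix.\clauset'$ follows at once, since this single statement yields both inclusions $\prefix.\clauset \qbfmodels \prefix.\clauset'$ and $\prefix.\clauset' \qbfmodels \prefix.\clauset$. The guiding observation is that the data defining a model splits into a purely \emph{structural} part, governed by the prefix and the variables, and a \emph{semantic} part, carried entirely by the leaf labels. First I would note that, because $\vars(\clauset) = \vars(\clauset')$ and both QBFs carry the same prefix $\prefix$, they share one and the same underlying complete assignment tree $T$: its depth, the variable ordering along each path (dictated by $\leq_\prefix$), and the universal/existential type of every internal node coincide. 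The three pre-model conditions (same root; both children retained at universal nodes; one child retained at existential nodes) refer only to $\prefix$, so $\prefix.\clauset$ and $\prefix.\clauset'$ have \emph{identical sets of pre-models}.

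Next I would compare the labellings on $T$. A leaf at the end of a root-to-leaf path $\tau$ is labelled $\tau(\clauset)$ in the reading for $\prefix.\clauset$ and $\tau(\clauset')$ in the reading for $\prefix.\clauset'$, where $\tau$ is a complete assignment over $\vars(\clauset)$. From $\clauset \equiv \clauset'$ we obtain $\tau \models \clauset$ iff $\tau \models \clauset'$ for every complete $\tau$, hence the two readings assign the \emph{same} label to every leaf. Since internal-node labels are computed bottom-up by fixed rules ($\top$ at an existential node iff some child is $\top$, at a universal node iff both children are $\top$) depending only on node types, which agree by the previous step, equal leaf labels propagate to equal labels at \emph{every} node of $T$.

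Finally I would combine the two halves. Fix any assignment tree $T'$. By the first step, $T'$ is a pre-model of $\prefix.\clauset$ iff it is a pre-model of $\prefix.\clauset'$; by the second step, the label of each node of $T'$ is the same under either reading. Therefore the condition ``every node of $T'$ is labelled $\top$'' holds for $\prefix.\clauset$ iff it holds for $\prefix.\clauset'$, i.e., $T' \qbfmodels \prefix.\clauset$ iff $T' \qbfmodels \prefix.\clauset'$. As $T'$ was arbitrary, the two QBFs have identical model sets, giving $\prefix.\clauset \treemodelequiv \prefix.\clauset'$.

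I do not expect a genuine obstacle; the content is careful bookkeeping of the definitions. The one point deserving attention is that requiring ``every node of $T'$ is labelled $\top$'' is equivalent to requiring it only of the leaves of $T'$, so that the bottom-up label computation on $T$ is compatible with the model condition imposed on the pre-model $T'$. I would discharge this small equivalence by the same induction used above, observing that at an existential node of $T'$ the unique retained child must itself carry label $\top$, so no inconsistency between the $T$-labelling and the $T'$-condition can arise.
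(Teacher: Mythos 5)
Your proof is correct. Note, however, that the paper itself offers no proof of this proposition: it is imported verbatim from Samulowitz, Davies, and Bacchus (the cited CP~2006 paper on QBF preprocessing), so there is no in-paper argument to compare yours against. Your argument is a sound, self-contained verification from the paper's own tree-model semantics: the pre-model conditions depend only on the prefix and variable set, which the two QBFs share; propositional equivalence of $\clauset$ and $\clauset'$ forces identical leaf labels $\tau(\clauset) = \tau(\clauset')$ on every complete path $\tau$; and the bottom-up labelling rules then force identical labels at every node, so the two QBFs have literally the same models, which is stronger than (and immediately yields) $\prefix.\clauset \treemodelequiv \prefix.\clauset'$. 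One small remark: the closing concern you raise (whether the model condition on all nodes of $T'$ reduces to a condition on its leaves) is not actually needed for your argument, since your label-agreement step already shows that \emph{every} node of the shared assignment tree carries the same label under both readings; whichever interpretation of the model condition one adopts, it is evaluated on identical data, so the equivalence of the two readings' model sets follows without that auxiliary reduction.
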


\begin{proposition}
\label{prop:qior:qbf:model:equiv}
If clause $C$ has $\qior$ on literal $l
\in C$ with respect to QBF $\prefix. \clauset$, then $\prefix. \clauset
\treemodelequiv \prefix. (\clauset \wedge \outerres(\prefix, C, D, l))$  for all $D
\in \clauset$ with $\bar l \in D$.
\end{proposition}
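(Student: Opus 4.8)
The plan is to reduce the claim to the propositional equivalence result of Proposition~\ref{prop:sat:equiv:implies:qbf:equiv}. The key observation is that the definition of $\qior$ (Definition~\ref{def:orig:qior}) gives us, for every $D \in \clauset$ with $\bar l \in D$, the \emph{propositional} implication $\clauset \models \outerres(\prefix, C, D, l)$. I would first argue that adding a propositionally implied clause to a CNF yields an equivalent CNF: since $\clauset \models \outerres(\prefix, C, D, l)$, every propositional model of $\clauset$ also satisfies $\outerres(\prefix, C, D, l)$, hence $\clauset \equiv (\clauset \wedge \outerres(\prefix, C, D, l))$. The reverse direction ($\clauset \wedge \outerres(\prefix, C, D, l) \models \clauset$) is immediate because conjoining a clause only shrinks the model set.

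Next I would invoke Proposition~\ref{prop:sat:equiv:implies:qbf:equiv} to lift this propositional equivalence to the QBF level. For this I must check its hypotheses: the two CNFs $\clauset$ and $\clauset' := \clauset \wedge \outerres(\prefix, C, D, l)$ must have the same variable set, and the prefix $\prefix$ must be defined precisely over those variables. The variable condition holds because $\outerres(\prefix, C, D, l) = (C \setminus \{l\}) \cup \outercl(\prefix, D, \bar l)$ is built entirely from literals of $C$ and $D$, both of which occur in $\clauset$; so conjoining it introduces no new variables and $\vars(\clauset') = \vars(\clauset)$. With the hypotheses met, Proposition~\ref{prop:sat:equiv:implies:qbf:equiv} directly yields $\prefix. \clauset \treemodelequiv \prefix.(\clauset \wedge \outerres(\prefix, C, D, l))$, which is exactly the desired conclusion for each such $D$.

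The main subtlety to watch is the variable-set bookkeeping. Proposition~\ref{prop:sat:equiv:implies:qbf:equiv} requires $\vars(\clauset) = \vars(\clauset')$ \emph{exactly}, not merely $\vars(\clauset') \subseteq \vars(\clauset)$; since the outer resolvent contributes no fresh variables this equality holds, but it is worth stating explicitly because the whole lifting argument depends on the prefix remaining defined precisely over the variables of both CNFs. A secondary point is that the outer resolvent must be a genuine (non-tautological) clause for the QBF $\prefix.\clauset'$ to be well-formed in the PCNF sense of Section~\ref{sec:prelims}; however, since equivalence at the CNF level is unaffected by the presence of a tautological conjunct (a tautology is implied by everything and implies nothing new), the equivalence argument goes through regardless, so this does not obstruct the proof.

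Beyond these bookkeeping concerns, the proof is essentially a one-line application once the propositional equivalence is established, so I do not anticipate a hard technical obstacle. The only real work is recognizing that $\qior$'s defining condition $\clauset \models \outerres(\prefix, C, D, l)$ is precisely what makes $\clauset$ and $\clauset \wedge \outerres(\prefix, C, D, l)$ propositionally equivalent, after which Proposition~\ref{prop:sat:equiv:implies:qbf:equiv} does the remaining lifting to QBF model equivalence ($\treemodelequiv$) for all relevant $D$.
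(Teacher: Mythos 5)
Your proof is correct and follows essentially the same route as the paper's: extract $\clauset \models \outerres(\prefix, C, D, l)$ from Definition~\ref{def:orig:qior}, conclude the propositional equivalence $\clauset \equiv \clauset \wedge \outerres(\prefix, C, D, l)$, and lift it to $\treemodelequiv$ via Proposition~\ref{prop:sat:equiv:implies:qbf:equiv}. Your additional variable-set bookkeeping (which the paper omits) is sound, with the small caveat that $\vars(\outerres(\prefix, C, D, l)) \subseteq \vars(\clauset)$ holds because the QBF $\prefix.\clauset$ is closed and $C$ is interpreted under $\prefix$, not because $C$ itself need occur in $\clauset$ --- in the intended applications $C$ has been removed from the CNF before the check.
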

\begin{proof}
Since $C$ has $\qior$ on literal $l \in C$ with respect to QBF
$\prefix. \clauset$, by Definition~\ref{def:orig:qior} we have $\clauset
\models \outerres(\prefix, C, D, l)$ for all $D \in \clauset$ with $\bar l \in
D$, and further also $\clauset
\equiv \clauset \wedge \outerres(\prefix, C, D, l)$.  Then $\prefix. \clauset
\treemodelequiv \prefix. (\clauset \wedge \outerres(\prefix, C, D, l))$ by
Proposition~\ref{prop:sat:equiv:implies:qbf:equiv}.
\qed
\end{proof}

By Proposition~\ref{prop:qior:qbf:model:equiv} any outer resolvent $\outerres$
of some clause $C$ that has $\qior$ with respect to some QBF
$\prefix. \clauset$ is redundant in the sense that it can be \emph{added to
the QBF $\prefix. \clauset$ in an equivalence preserving way}
($\treemodelequiv$), i.e., $\outerres$ is \emph{implied by the QBF}
$\prefix. \clauset$ ($\qbfmodels$). This is the central characteristic of our
generalization $\newqrat$ of $\qrat$. We develop a redundancy property used in
$\newqrat$ which allows to, e.g., remove a clause $C$ from a QBF
$\prefix. \clauset$ in a satisfiability preserving way (like in $\qrat$,
cf.~Theorem~\ref{thm:orig:qior:clause:soundness}.) if all respective outer
resolvents of $C$ are implied by the QBF $\prefix. (\clauset \setminus \{C\})$.  Since checking QBF
implication is intractable just like checking propositional implication in
$\qior$, in practice we apply a polynomial-time incomplete QBF implication
check based on \emph{QBF unit propagation}.

In the following, we develop a theoretical framework of \emph{abstractions} of 
QBFs that underlies our generalization $\newqrat$ of $\qrat$. Abstractions are
crucial for the soundness of checking QBF implication by QBF unit propagation.

\begin{definition}[nesting levels, prefix/QBF abstraction] 
\label{def:abs}
Let $\qclauset := \prefix. \clauset$ be a QBF with prefix $\prefix := Q_1B_1
\ldots Q_iB_i Q_{i+1}B_{i+1} \ldots Q_nB_n$. 
For a clause $C$, $\levels(\prefix, C) := \{i \mid
\exists l \in C, \quant{\prefix}{l} = Q_i\}$ is the set of
\emph{nesting levels} in $C$.\footnote{In general, clauses $C$ 
  are always (implicitly) interpreted under a quantifier prefix $\prefix$.} 
The \emph{abstraction of $\prefix$} with
respect to $i$ with $0 \leq i \leq n$ produces the \emph{abstracted prefix}
$\eabs(\prefix, i) := \prefix$ for $i = 0$ and otherwise $\eabs(\prefix, i) :=
\exists (B_1 \cup \ldots \cup B_i) Q_{i+1}B_{i+1} \ldots Q_nB_n$.
The \emph{abstraction of $\qclauset$} with
respect to $i$ with $0 \leq i \leq n$ produces the \emph{abstracted QBF} $\eabs(\qclauset, i) :=
\eabs(\prefix, i). \clauset$ with prefix $\eabs(\prefix, i)$.
\end{definition}

\begin{example}
\label{ex:basic:eabs}
Given the QBF $\qclauset := \prefix. \clauset$ with prefix 
$\prefix := \forall B_1 \exists B_2 \forall B_3 \exists B_4$. We have $\eabs (\qclauset, 0) =
\qclauset$, $\eabs (\qclauset, 1) = \eabs (\qclauset, 2) = \exists (B_1 \cup
B_2) \forall B_3 \exists B_4. \clauset$, $\eabs
(\qclauset, 3) = \eabs (\qclauset, 4) = \exists (B_1 \cup B_2 \cup B_3
\cup B_4). \clauset$. 
\end{example}

In an abstracted QBF $\eabs(\qclauset, i)$ universal variables from
blocks smaller than or equal to $B_i$ are converted into existential
ones. If the original QBF $\qclauset$ has a model $T$, then \emph{all}
nodes in $T$ associated to universal variables must be labelled with
$\top$, in particular the universal variables that are existential in
$\eabs(\qclauset, i)$. Hence, for \emph{all} models $T$ of $\qclauset$, \emph{every}
model $T^A$ of $\eabs(\qclauset, i)$ is a subtree of $T$.

\begin{proposition}
\label{prop:abs:property:new}
Given a QBF $\qclauset := \prefix. \clauset$ with prefix $\prefix := Q_1B_1
\ldots Q_iB_i \ldots Q_nB_n$ and $\eabs(\qclauset,
i)$ for some arbitrary $i$ with $0 \leq i \leq n$.
For all $T$ and $T^A$ we have that if $T \qbfmodels
\qclauset$ and $T^A \subseteq T$ is a pre-model of $\eabs(\qclauset,i)$, then $T^{A}
\qbfmodels \eabs(\qclauset,i)$. 
\end{proposition}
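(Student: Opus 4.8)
The plan is to exploit the fact that forming the abstraction $\eabs(\qclauset, i)$ leaves the quantifier-free matrix $\clauset$ untouched: it only converts the universal variables in $B_1, \ldots, B_i$ into existential ones, while preserving both the variable set and the extended prefix order. Consequently, the assignment trees of $\qclauset$ and of $\eabs(\qclauset, i)$ have exactly the same shape and differ only in which internal nodes are tagged universal versus existential. In particular, the label of any leaf---being the value $\tau(\clauset)$ of the matrix under the complete assignment $\tau$ read off along its path---is the same in both trees, since it does not depend on the quantifier tags at all. This observation is what lets me transport the $\top$-labels of $T$ over to $T^A$.

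First I would record that, since $T \qbfmodels \qclauset$, every node of $T$ is labelled $\top$; in particular every leaf of $T$ carries the value $\top$, i.e.\ the complete assignment on that branch satisfies $\clauset$. Next, because $T^A \subseteq T$ is a rooted subtree, every leaf of $T^A$ is by definition of $\subseteq$ also a leaf of $T$. Combining these two facts with the leaf-label invariance noted above, every leaf of $T^A$ is labelled $\top$ in the assignment tree of $\eabs(\qclauset, i)$ as well.

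It then remains to promote this statement about leaves to a statement about \emph{all} nodes of $T^A$. Since $T^A$ is assumed to be a pre-model of $\eabs(\qclauset, i)$, I would argue by a bottom-up induction along $T^A$: at an existential node of $\eabs(\qclauset, i)$ the pre-model contains exactly one child, which is $\top$ by the induction hypothesis, so the node is labelled $\top$; at a universal node both children lie in $T^A$ and are $\top$ by hypothesis, so the node is again $\top$. Hence every node of $T^A$ is labelled $\top$, which by definition means $T^A \qbfmodels \eabs(\qclauset, i)$. I expect the only genuinely delicate point to be the very first one---verifying that leaf labels are insensitive to the existential/universal tagging, so that satisfaction of $\clauset$ is all that matters---after which the rooted-subtree containment and the pre-model propagation are routine.
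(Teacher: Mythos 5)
Your proof is correct, but it takes a genuinely different route from the paper's. The paper argues by induction on the abstraction index $i$: the inductive step distinguishes whether the next block $Q_{i+1}B_{i+1}$ is existential (then the abstraction is unchanged and the induction hypothesis applies directly) or universal, and in the universal case it derives a contradiction by exhibiting a single branch of $\bot$-labelled nodes inside a chain of pre-models $T^B \subseteq T^A \subseteq T$. You instead argue directly: leaf labels are $\tau(\clauset)$ and hence insensitive to the quantifier tags, the rooted-subtree relation $T^A \subseteq T$ is by definition containment of leaf sets, and a pre-model is a model iff all of its leaves are labelled $\top$ (your bottom-up propagation step, which goes through whether internal labels of a pre-model are read as recomputed inside $T^A$ or inherited from the full assignment tree, since in either reading the unique child of an existential node witnesses the ``one child is $\top$'' condition). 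Your route buys two things: it is more elementary, needing no induction on $i$ and no contradiction, and it is actually more general, since nothing in it uses the direction of the retagging --- the same argument shows the claim for an arbitrary change of quantifier tags over the same matrix, not only for the outer universal-to-existential abstraction $\eabs(\qclauset,i)$. What the paper's formulation buys is uniformity: the same induction-on-$i$, case-split-on-$Q_j$ machinery is reused nearly verbatim to prove Lemma~\ref{lem:abs:equiv:implies:qbf:equiv}, the result that consumes Proposition~\ref{prop:abs:property:new}. One point you should state explicitly rather than leave tacit: the claim that the two assignment trees ``have exactly the same shape'' presupposes that the merged block $\exists (B_1 \cup \ldots \cup B_i)$ keeps the variable ordering inherited from $\prefix$; the paper makes the same implicit assumption (otherwise $T^A \subseteq T$ would not even be well-formed), so this is a reading to record, not a gap.
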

\begin{proof}
  By induction on $i$.  The base case $i := 0$ is trivial.

As induction hypothesis (IH), assume that the claim holds for some $i$ with $0
\leq i < n$, i.e., for all $T$ and $T^A$ we have that if $T \qbfmodels \qclauset$ and $T^A \subseteq T$
is a pre-model of $\eabs(\qclauset,i)$, then $T^{A} \qbfmodels
\eabs(\qclauset,i)$.  Consider $\eabs(\qclauset, j)$ for $j = i+1$,
which is an abstraction of $\eabs(\qclauset, i)$.  We have to show
that, for all $T$ and $T^B$ we have that if $T \qbfmodels \qclauset$ and $T^B \subseteq T$ is a pre-model
of $\eabs(\qclauset,j)$, then $T^{B} \qbfmodels \eabs(\qclauset,j)$.
We distinguish cases by the type of $Q_j$ in the abstracted prefix
$\eabs(\prefix, i) = \exists (B_1 \cup \ldots \cup B_i) Q_jB_j \ldots
Q_nB_n$ of $\eabs(\qclauset, i)$.

If $Q_j = \exists$ then $\eabs(\prefix, i) = \eabs(\prefix, j) = \exists (B_1 \cup \ldots B_i \cup B_j) \ldots
Q_nB_n$. Since $\eabs(\qclauset, i) = \eabs(\qclauset, j)$, the claim holds for $\eabs(\qclauset, j)$
by IH.

If $Q_j = \forall$ then, towards a contradiction, assume that, for some $T$ and $T^B$, $T
\qbfmodels \qclauset$ and $T^B \subseteq T$ is a pre-model of
$\eabs(\qclauset,j)$, but $T^{B} \not \qbfmodels
\eabs(\qclauset,j)$. Then the root of $T^B$ is labelled with $\bot$,
and in particular the nodes of all the variables which are existential
in $B_j$ with respect to $\eabs(\prefix, j)$ are also labelled with
$\bot$. These existential variables appear along a single branch
$\tau'$ in $T^B$, i.e., $\tau'$ is a partial assignment of the
variables in $B_j$. Since $T^B \subseteq T^A$ and $Q_j = \forall$ in
$\eabs(\prefix, i)$, the root of $T^A$ is labelled with $\bot$ since
there is the branch $\tau'$ containing the variables in $B_j$ whose
nodes are labelled with $\bot$ in $T^A$. Hence $T^A \not \qbfmodels
\eabs(\qclauset,i)$, which is a contradiction to IH. Therefore, we
conclude that $T^{B} \qbfmodels \eabs(\qclauset,j)$.
\qed
\end{proof}

If an abstraction $\eabs(\qclauset, i)$ is unsatisfiable then also the original
QBF $\qclauset$ is
unsatisfiable due to Proposition~\ref{prop:abs:property:new}. We generalize
Proposition~\ref{prop:sat:equiv:implies:qbf:equiv} from CNFs to QBFs
and their abstractions. Note that the full abstraction $\eabs(\qclauset, i)$ for
$i := n$ of a QBF $\qclauset$ is a CNF, i.e., it does not contain any universal variables.

\begin{lemma}
\label{lem:abs:equiv:implies:qbf:equiv}
Let $\qclauset := \prefix. \clauset$ and $\qclauset' := \prefix. \clauset'$
be QBFs with the same prefix $\prefix := Q_1B_1 \ldots Q_iB_i \ldots Q_nB_n$. Then for all $i$, if $\eabs(\qclauset, i) 
\treemodelequiv \eabs(\qclauset', i)$ then 
$\qclauset \treemodelequiv \qclauset'$.

\end{lemma}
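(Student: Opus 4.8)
The plan is to prove the two directions of $\treemodelequiv$ separately; since the hypothesis $\eabs(\qclauset, i) \treemodelequiv \eabs(\qclauset', i)$ and the roles of $\qclauset$ and $\qclauset'$ are symmetric, it suffices to establish $\qclauset \qbfmodels \qclauset'$. I will exploit that $\qclauset$ and $\qclauset'$ share the prefix $\prefix$, so any pre-model of one is a pre-model of the other, and a pre-model is a model precisely when all of its branches (complete assignments) satisfy the respective CNF part. Hence the goal reduces to showing that, for every model $T$ of $\qclauset$, every branch of $T$ also satisfies $\clauset'$.

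So I would fix a model $T \qbfmodels \qclauset$ and an arbitrary branch $\tau$ of $T$, and build a rooted subtree $T^A \subseteq T$ that is a pre-model of $\eabs(\qclauset, i)$ and contains $\tau$. The construction follows the abstraction: in the first $i$ blocks, where $\eabs(\prefix, i)$ is purely existential, I prune every node to the single child dictated by $\tau$, obtaining one path down to the node $v$ reached after assigning $B_1 \cup \ldots \cup B_i$; below level $i$ the prefix of $\eabs(\qclauset, i)$ coincides with $\prefix$, so I keep the entire subtree of $T$ rooted at $v$ unchanged. By construction $T^A$ has the same root as $T$, its existential nodes in the top blocks have exactly one child, and below $v$ it inherits the pre-model structure of $T$; thus $T^A \subseteq T$ is indeed a pre-model of $\eabs(\qclauset, i)$, and $\tau$ is one of its branches.

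Then Proposition~\ref{prop:abs:property:new} applies directly and yields $T^A \qbfmodels \eabs(\qclauset, i)$; the hypothesis $\eabs(\qclauset, i) \treemodelequiv \eabs(\qclauset', i)$ then gives $T^A \qbfmodels \eabs(\qclauset', i)$. As $\eabs(\qclauset', i)$ has CNF part $\clauset'$, every branch of $T^A$ --- and in particular $\tau$ --- satisfies $\clauset'$. Since $\tau$ was arbitrary, all branches of $T$ satisfy $\clauset'$, so the pre-model $T$ is in fact a model of $\qclauset'$, establishing $\qclauset \qbfmodels \qclauset'$; the symmetric argument gives the converse, hence $\qclauset \treemodelequiv \qclauset'$. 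The main obstacle is the middle step: carefully defining $T^A$ as a legitimate rooted subtree that is simultaneously a pre-model of the abstracted formula and contains the prescribed branch $\tau$, and checking that this places us exactly in the hypotheses of Proposition~\ref{prop:abs:property:new}. The reduction of ``being a model'' to ``all branches satisfy the CNF'', which relies on the shared prefix, also needs to be made explicit.
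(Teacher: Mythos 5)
Your proposal is correct, but it takes a genuinely different route from the paper. The paper proves the lemma by induction on the abstraction level: the base case $i=0$ is trivial, and the inductive step from level $i$ to $j=i+1$ splits on the type of $Q_j$, handling the $Q_j = \forall$ case by contradiction --- inside a purported model $T$ of $\eabs(\qclauset,i)$ that falsifies $\eabs(\qclauset',i)$ it exhibits a falsifying pre-model $T^A \subseteq T$ of $\eabs(\qclauset',j)$, then invokes Proposition~\ref{prop:abs:property:new} to get $T^A \qbfmodels \eabs(\qclauset,j)$ and contradict the assumed equivalence at level $j$. You avoid induction entirely: for a model $T$ of $\qclauset$ and an arbitrary branch $\tau$ of $T$ you construct a pre-model $T^A \subseteq T$ of $\eabs(\qclauset,i)$ containing $\tau$ (the single path dictated by $\tau$ through the existentialized blocks $B_1,\ldots,B_i$, followed by the full subtree of $T$ below), apply Proposition~\ref{prop:abs:property:new} once to get $T^A \qbfmodels \eabs(\qclauset,i)$, transfer this to $\eabs(\qclauset',i)$ by the hypothesis, and read off $\tau(\clauset')=\top$; since the prefixes agree and $\tau$ was arbitrary, $T \qbfmodels \qclauset'$, with the converse by symmetry. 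Both arguments lean on Proposition~\ref{prop:abs:property:new} as the sole substantive ingredient, but your decomposition is more elementary: it needs no case split on quantifier type and no contradiction, and it actually establishes the finer one-directional statement that $\eabs(\qclauset,i) \qbfmodels \eabs(\qclauset',i)$ alone already implies $\qclauset \qbfmodels \qclauset'$. The price is that you must spell out two routine facts that the paper's argument sidesteps, both of which you correctly flag and both of which hold: that a pre-model (whose shape depends only on the shared prefix) is a model iff every one of its branches satisfies the matrix, and that your pruned tree $T^A$ is indeed a rooted subtree of $T$ that is a pre-model of the abstraction and contains the prescribed branch $\tau$.
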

\begin{proof}
By induction on $i := 0$ up to $i := n$.  The base case $i := 0$ is trivial.

As induction hypothesis (IH), assume that the claim holds for some $i$ with $0
\leq i < n$, i.e., if $\eabs(\qclauset, i) \treemodelequiv
\eabs(\qclauset', i)$ then $\qclauset \treemodelequiv \qclauset'$.
Let $j = i+1$ and consider $\eabs(\qclauset, j)$ and
$\eabs(\qclauset', j)$, which are abstractions of $\eabs(\qclauset,
i)$ and $\eabs(\qclauset', i)$.  We have $\eabs(\prefix, i) = \exists
(B_1 \cup \ldots \cup B_i) Q_jB_j \ldots Q_nB_n$ and $\eabs(\prefix,
j) = \exists (B_1 \cup \ldots \cup B_j) \ldots Q_nB_n$.  We show that
if $\eabs(\qclauset, j) \treemodelequiv \eabs(\qclauset', j)$ then
$\eabs(\qclauset, i) \treemodelequiv \eabs(\qclauset', i)$, and hence
also $\qclauset \treemodelequiv \qclauset'$ by IH.  Assume that
$\eabs(\qclauset, j) \treemodelequiv \eabs(\qclauset', j)$. We
distinguish cases by the type of $Q_j$ in $\eabs(\prefix, i)$.  If
$Q_j = \exists$ then $\eabs(\prefix, i) = \eabs(\prefix, j) = \exists
(B_1 \cup \ldots B_i \cup B_j) \ldots Q_nB_n$, and hence
$\eabs(\qclauset, i) \treemodelequiv \eabs(\qclauset', i)$.

If $Q_j = \forall$, then towards a contradiction, assume that
$\eabs(\qclauset, j) \treemodelequiv \eabs(\qclauset', j)$ but
$\eabs(\qclauset, i) \not \treemodelequiv \eabs(\qclauset', i)$.  Then
there exists $T$ such that $T \qbfmodels \eabs(\qclauset, i)$ but $T
\not \qbfmodels \eabs(\qclauset',i)$.  Since $T \not \qbfmodels
\eabs(\qclauset',i)$ there exists a pre-model $T^{A} \subseteq T$ of
$\eabs(\qclauset',j)$ such that the root of $T^A$ is labelled with
$\bot$, and in particular the nodes of all the variables which are
existential in $B_j$ with respect to $\eabs(\prefix, j)$ (and
universal with respect to $\eabs(\prefix, i)$) are also labelled with
$\bot$. These existential variables appear along a single branch
$\tau'$ in $T^A$, i.e., $\tau'$ is a partial assignment of the
variables in $B_j$.  Therefore we have $T^A \not \qbfmodels
\eabs(\qclauset', j)$.  Since $T \qbfmodels \eabs(\qclauset,i)$ and
$T^A \subseteq T$, we have
$T^A \qbfmodels \eabs(\qclauset, j)$ by Proposition~\ref{prop:abs:property:new}, which contradicts the assumption
that $\eabs(\qclauset, j) \treemodelequiv \eabs(\qclauset', j)$.
\qed 
\end{proof}

The converse of Lemma~\ref{lem:abs:equiv:implies:qbf:equiv} 
does not hold. From the
equivalence of two QBFs $\qclauset$ and $\qclauset'$  we
cannot conclude that the abstractions $\eabs(\qclauset, i)$ and $\eabs(\qclauset', i)$ are
equivalent. In our
generalization $\newqrat$ of the $\qrat$ system we  
check whether an outer
resolvent of some clause $C$ 
is implied ($\qbfmodels$) by an \emph{abstraction} of the given QBF. If so
then by Lemma~\ref{lem:abs:equiv:implies:qbf:equiv} the outer resolvent is
also implied by the original QBF. Below we prove that this condition is
sufficient for the soundness of redundancy removal in $\newqrat$. To check QBF
implication in an incomplete way and in polynomial time, in practice we apply \emph{QBF
unit propagation},  which is an extension of propositional unit propagation, to
abstractions of the given QBF. 

\begin{definition}[universal reduction, UR~\cite{DBLP:journals/iandc/BuningKF95}]
\label{def:forallred}
Given a QBF $\qclauset := \prefix. \clauset$ and a non-tautological clause $C$, \emph{universal
  reduction (UR)} of $C$ produces the clause $\UR(\prefix, C) :=
C \setminus \{l \in C \mid \quant{\prefix}{l} = \forall, \forall l' \in
C, \quant{\prefix}{l'} = \exists: \var{l'} \leq_\prefix \var{l}\}$.
\end{definition}

\begin{definition}[QBF unit propagation, QUP]
\label{def:qprop}
\emph{QBF unit propagation (QUP)} extends UP (Definition~\ref{def:prop}) by 
applications of UR. For a QBF
$\qclauset := \prefix. \clauset$ and a clause $C$, let 
$\prefix. (\clauset \wedge \overline{C}) \unitQprop \emptyset$ 
denote the fact that QUP applied to
$\prefix. (\clauset \wedge \overline{C})$ produces the empty
clause, where $\overline{C}$ is the conjunction of the negation of all the literals in
$C$. If $\prefix. (\clauset \wedge \overline{C}) \unitQprop \emptyset$ and
additionally $\prefix. \clauset \qbfmodels \prefix. (\clauset \wedge C)$ then we
write $\qclauset \unitQprop C$ to denote that
\emph{$C$ can be derived from
  $\qclauset$ by QUP}.
\end{definition}

In contrast to UP (Definition~\ref{def:prop}), 
deriving the empty clause by QUP by propagating $\overline{C}$ on a QBF $\qclauset$ 
 is not sufficient to conclude that $C$ is implied by $\qclauset$. 

\begin{example}
\label{ex:qprop}
Given the QBF $\prefix. \clauset$ with prefix $\prefix :=
\forall u \exists x$ and CNF 
$\clauset := (u \vee \bar x) \wedge (\bar u \vee x)$ 
and the clause $C := (x)$. We have 
$\prefix. ((u \vee \bar x) \wedge (\bar u \vee x) \wedge (\bar x)) \unitQprop
\emptyset$ since propagating $\overline{C} = (\bar x)$ produces 
$(\bar u)$, which is reduced to $\emptyset$ by UR. However, 
$\prefix. \clauset \not \qbfmodels \prefix. (\clauset \wedge C)$ since $\prefix. \clauset$ is satisfiable
whereas $\prefix. (\clauset \wedge C)$ is unsatisfiable.  Note that $\eabs(\prefix. ((u \vee \bar x) \wedge (\bar u \vee x) \wedge
\bar x), 2) \nunitQprop \emptyset$.
\end{example}

To correctly apply QUP for checking whether some clause $C$ (e.g., an
outer resolvent) is implied by a QBF $\qclauset := \prefix. \clauset$ and thus avoid the problem
illustrated in Example~\ref{ex:qprop}, we carry out QUP on a
\emph{suitable abstraction} of $\qclauset$ with respect to $C$.  Let
$i = \max(\levels(\prefix, C))$ be the maximum nesting level of
variables that appear in $C$. We show that if QUP derives the empty
clause from the abstraction $\eabs(\qclauset, i)$ augmented by the
negated clause $\overline{C}$, i.e., $\eabs(\prefix. (\clauset \wedge
\overline{C}), i) \unitQprop \emptyset$, then we can safely conclude
that $C$ is implied by the \emph{original} QBF, i.e.,
$\prefix. \clauset \qbfmodels \prefix. (\clauset \wedge C)$. 
This approach extends failed literal detection for QBF 
preprocessing~\cite{DBLP:conf/sat/LonsingB11}.

\begin{lemma}
\label{lem:outer:qprop:soundness}
Let $\prefix. \clauset$ be a QBF with prefix $\prefix := Q_1B_1
\ldots Q_nB_n$ and $C$ a clause such that $\vars(C) \subseteq B_1$. If $\prefix. (\clauset \wedge \overline{C}) \unitQprop
\emptyset$ then $\prefix. \clauset \treemodelequiv \prefix. (\clauset \wedge C)$.
\end{lemma}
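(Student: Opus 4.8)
The plan is to prove the two inclusions behind $\treemodelequiv$ separately. The direction $\prefix.(\clauset \wedge C) \qbfmodels \prefix. \clauset$ is immediate and does not use the propagation hypothesis: the assignment tree and the notion of pre-model depend only on the prefix, and any pre-model whose leaves all satisfy $\clauset \wedge C$ has in particular all leaves satisfying $\clauset$, so the same tree is a model of $\prefix.\clauset$. All the work lies in the converse $\prefix. \clauset \qbfmodels \prefix.(\clauset \wedge C)$, i.e.\ showing that every model of $\prefix. \clauset$ already has all its leaves satisfying the extra clause $C$.

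For that I would first turn the propagation hypothesis into a statement about satisfiability. Since $\prefix.(\clauset \wedge \overline{C}) \unitQprop \emptyset$, the empty clause is derivable from $\clauset \wedge \overline{C}$ by a sequence of unit-resolution and universal-reduction steps. Both operations are sound QBF inference rules (unit resolution is an instance of Q-resolution, and universal reduction preserves truth), so the derivation of $\emptyset$ certifies that $\prefix.(\clauset \wedge \overline{C})$ is \emph{unsatisfiable}; I would record this as the first key step, citing soundness of (unit) resolution and universal reduction.

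The heart of the argument combines this unsatisfiability with the restriction $\vars(C) \subseteq B_1$. I take an arbitrary model $T'$ of $\prefix. \clauset$ and inspect its outermost block $B_1$. In the setting in which the lemma is applied the outermost block is existential (for instance $B_1$ is the merged existential block of an abstraction $\eabs(\qclauset,i)$), so the initial segment of $T'$ through the variables of $B_1$ is a \emph{single} path; let $\beta$ be the resulting assignment of $B_1$, which is then shared by every leaf of $T'$. Because $\vars(C) \subseteq B_1$, the truth of $C$ at a leaf is determined entirely by $\beta$. Suppose, for contradiction, $\beta \not\models C$; then $\beta$ falsifies every literal of $C$, i.e.\ $\beta \models \overline{C}$, so every leaf of $T'$ satisfies $\clauset$ (as $T'$ is a model) and also $\overline{C}$. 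Hence $T'$ is a model of $\prefix.(\clauset \wedge \overline{C})$, contradicting the unsatisfiability established above. Therefore $\beta \models C$, every leaf of $T'$ satisfies $C$, and $T'$ is a model of $\prefix.(\clauset \wedge C)$. This yields the converse inclusion and thus $\prefix. \clauset \treemodelequiv \prefix.(\clauset \wedge C)$.

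I expect the main obstacle to be exactly the gap highlighted by Example~\ref{ex:qprop}: deriving $\emptyset$ by QUP only witnesses unsatisfiability of $\prefix.(\clauset \wedge \overline{C})$, and unsatisfiability \emph{alone} does not give the equivalence, because adding $\overline{C}$ interacts with the quantifier structure. What rescues the argument is the hypothesis $\vars(C) \subseteq B_1$ together with $B_1$ being the outermost block, which forces all leaves of a model to agree on the variables of $C$ through the single assignment $\beta$; this is precisely what lets unsatisfiability of the $\overline{C}$-augmented formula exclude exactly those models that would falsify $C$. I would therefore take care to invoke the outer-block restriction explicitly, and to note that it is the \emph{existential} outermost block that makes $\beta$ unique (the case relevant for the later use of this lemma on abstractions, where otherwise universal reduction could collapse $\overline{C}$ on its own).
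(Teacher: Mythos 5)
Your proposal is correct and follows essentially the same route as the paper's proof: soundness of unit resolution and universal reduction turns the hypothesis $\prefix.(\clauset \wedge \overline{C}) \unitQprop \emptyset$ into unsatisfiability of $\prefix.(\clauset \wedge \overline{C})$, and then any model of $\prefix.\clauset$ containing a path that falsifies $C$ would itself be a model of that unsatisfiable QBF, a contradiction. The reverse direction of $\treemodelequiv$ is the trivial one, which the paper leaves implicit.

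The one place where you go beyond the paper is worth highlighting. The paper's proof passes from ``$\tau(\overline{C}) = \top$ for one path $\tau$'' to ``$T \qbfmodels \prefix.(\clauset \wedge \overline{C})$'' without comment; this step is valid only because every path of a model carries the \emph{same} assignment to $\vars(C)$, which is exactly your observation that the variables of $C$ must lie in an \emph{existential} outermost block. That side condition is genuinely needed and is absent from the statement as written: if $Q_1 = \forall$, take $\prefix := \forall u\, \exists x$, $\clauset := (u \vee x) \wedge (\bar u \vee x)$, and $C := (u)$; then $\overline{C} = (\bar u)$ is reduced to $\emptyset$ by UR (exactly as the derived unit $(\bar u)$ is reduced in Example~\ref{ex:qprop}), so the hypothesis holds, yet $\prefix.\clauset$ is satisfiable while $\prefix.(\clauset \wedge C)$ is not, so the conclusion fails. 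Since the lemma is only ever invoked (in Lemma~\ref{lem:abs:qprop:soundness}) on abstractions whose leftmost block is existential and contains all of $\vars(C)$, your restricted proof covers everything the paper needs, and it makes explicit a hypothesis that the paper's own argument silently relies on.
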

\begin{proof}
By contradiction, assume $T \qbfmodels \prefix. \clauset$ but 
$T \not \qbfmodels \prefix. (\clauset \wedge C)$.  Then there
is a path $\tau \subseteq T$ such that $\tau(C) = \bot$.  Since $\vars(C) \subseteq B_1$ and $\prefix. (\clauset
\wedge \overline{C}) \unitQprop \emptyset$, the QBF $\prefix. (\clauset \wedge
\overline{C})$ is unsatisfiable and in particular $T \not \qbfmodels \prefix. (\clauset \wedge
\overline{C})$. Since $\tau(C) = \bot$, we have 
$\tau(\overline{C}) = \top$ and hence $T \qbfmodels \prefix. (\clauset
\wedge \overline{C})$, which is a contradiction.
\qed
\end{proof}

\begin{lemma}
\label{lem:abs:qprop:soundness}
Let $\prefix. \clauset$ be a QBF, $C$ a clause,
and $i = \max(\levels(\prefix, C))$. 
If $ \eabs(\prefix. (\clauset \wedge \overline{C}), i) \unitQprop
\emptyset$ then $\eabs(\prefix. \clauset, i) \treemodelequiv
\eabs(\prefix. (\clauset \wedge C), i)$.
\end{lemma}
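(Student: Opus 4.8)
The plan is to reduce the statement to Lemma~\ref{lem:outer:qprop:soundness}, which already handles the special case in which all variables of the clause lie in the \emph{leading} block of the prefix. The bridge is the observation that, after abstracting with respect to $i = \max(\levels(\prefix, C))$, exactly this situation occurs: the variables of $C$ are forced into the outermost, now \emph{existential}, block.

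First I would unfold the abstraction. By Definition~\ref{def:abs} the abstracted prefix is $\eabs(\prefix, i) = \exists (B_1 \cup \ldots \cup B_i) Q_{i+1}B_{i+1} \ldots Q_nB_n$, and both QBFs appearing in the conclusion share this prefix, since adding $C$ to $\clauset$ does not alter the prefix. Concretely, $\eabs(\prefix. (\clauset \wedge \overline{C}), i) = \eabs(\prefix, i). (\clauset \wedge \overline{C})$, $\eabs(\prefix. \clauset, i) = \eabs(\prefix, i). \clauset$, and $\eabs(\prefix. (\clauset \wedge C), i) = \eabs(\prefix, i). (\clauset \wedge C)$. Thus both the hypothesis and the desired conclusion can be restated entirely in terms of the single QBF $\eabs(\prefix, i). \clauset$ and the clause $C$.

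Next I would verify the hypothesis of Lemma~\ref{lem:outer:qprop:soundness} for $\eabs(\prefix, i). \clauset$. Because $i$ is the \emph{maximum} nesting level among the variables occurring in $C$, every variable of $C$ lies in one of the blocks $B_1, \ldots, B_i$, i.e.\ $\vars(C) \subseteq B_1 \cup \ldots \cup B_i$. In $\eabs(\prefix, i)$ these blocks are merged into the single leading existential block $\exists (B_1 \cup \ldots \cup B_i)$, so $C$ satisfies the requirement of Lemma~\ref{lem:outer:qprop:soundness} that all its variables occur in the first block of the prefix under consideration. The remaining hypothesis $\eabs(\prefix, i). (\clauset \wedge \overline{C}) \unitQprop \emptyset$ is exactly the assumption of the present lemma rewritten via the identities above. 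Applying Lemma~\ref{lem:outer:qprop:soundness} then gives $\eabs(\prefix, i). \clauset \treemodelequiv \eabs(\prefix, i). (\clauset \wedge C)$, which by those same identities is precisely $\eabs(\prefix. \clauset, i) \treemodelequiv \eabs(\prefix. (\clauset \wedge C), i)$, as required.

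I expect the only delicate point to be justifying the applicability of Lemma~\ref{lem:outer:qprop:soundness}: its argument relies on the variables of $C$ being fixed throughout any model tree, which holds precisely because they populate the outermost \emph{existential} block of $\eabs(\prefix, i)$. Abstraction is exactly what guarantees this existential status, so I would make explicit that converting blocks $B_1, \ldots, B_i$ into existential quantifiers is what lets the leading-block lemma fire. No separate induction on the quantifier structure is needed here, since that work has already been absorbed into Lemma~\ref{lem:outer:qprop:soundness}.
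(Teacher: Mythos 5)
Your proposal is correct and takes essentially the same route as the paper: the paper's proof is precisely the observation that, since $i = \max(\levels(\prefix, C))$, all variables of $C$ lie in the leftmost (existential) block of $\eabs(\prefix. (\clauset \wedge \overline{C}), i)$, so Lemma~\ref{lem:outer:qprop:soundness} applies directly. Your additional bookkeeping (that abstraction leaves the matrix untouched, and that the merged block's existential status is what makes the leading-block lemma fire) just makes explicit what the paper leaves implicit.
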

\begin{proof}
The claim follows from Lemma~\ref{lem:outer:qprop:soundness} since all
variables that appear in $C$ are existentially quantified in
$\eabs(\prefix. (\clauset \wedge \overline{C}), i)$ in the leftmost quantifier
block.
\qed
\end{proof}

\begin{lemma}
\label{lem:abs:treemodel:soundness}
Let $\prefix. \clauset$ be a QBF, $C$ a clause,
and $i = \max(\levels(\prefix, C))$. 
If $ \eabs(\prefix. (\clauset \wedge \overline{C}), i) \unitQprop
\emptyset$ then $\prefix. \clauset \treemodelequiv 
\prefix. (\clauset \wedge C)$.
\end{lemma}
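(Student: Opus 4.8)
The plan is to obtain this result as a direct composition of the two immediately preceding lemmas, with no new machinery required. First I would feed the hypothesis $\eabs(\prefix. (\clauset \wedge \overline{C}), i) \unitQprop \emptyset$, where $i = \max(\levels(\prefix, C))$, into Lemma~\ref{lem:abs:qprop:soundness}. Since the hypothesis of the current statement is literally the hypothesis of that lemma, it immediately yields $\eabs(\prefix. \clauset, i) \treemodelequiv \eabs(\prefix. (\clauset \wedge C), i)$, i.e.\ the two abstractions at level $i$ are tree-model equivalent.

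Next I would lift this equivalence of abstractions back to the full QBFs using Lemma~\ref{lem:abs:equiv:implies:qbf:equiv}. Setting $\qclauset := \prefix. \clauset$ and $\qclauset' := \prefix. (\clauset \wedge C)$, that lemma guarantees that whenever $\eabs(\qclauset, i) \treemodelequiv \eabs(\qclauset', i)$ holds for some index $i$, then $\qclauset \treemodelequiv \qclauset'$. Instantiating it at exactly the $i = \max(\levels(\prefix, C))$ produced in the first step discharges its premise and delivers the desired conclusion $\prefix. \clauset \treemodelequiv \prefix. (\clauset \wedge C)$.

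The only point that needs care — and the closest thing to an obstacle in an otherwise routine chaining — is verifying that the side conditions of the invoked lemmas are genuinely met. In particular, Lemma~\ref{lem:abs:equiv:implies:qbf:equiv} requires $\qclauset$ and $\qclauset'$ to share the \emph{same} prefix $\prefix$. This holds here because $C$ is interpreted under $\prefix$, so by the definition of $\levels(\prefix, C)$ every literal of $C$ is quantified in $\prefix$ and hence $\vars(C) \subseteq \vars(\clauset)$; conjoining $C$ to $\clauset$ therefore introduces no new variables and leaves $\prefix$ unchanged, so both $\prefix. \clauset$ and $\prefix. (\clauset \wedge C)$ are legitimate QBFs over the same prefix. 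I would also record that $i = \max(\levels(\prefix, C))$ is a valid abstraction index, since nesting levels range over $\{1, \ldots, n\}$ and thus $1 \leq i \leq n$, so that the $\eabs(\cdot, i)$ used above is well defined and the ``for all $i$'' quantification in Lemma~\ref{lem:abs:equiv:implies:qbf:equiv} applies at this specific $i$. Once these conditions are checked, the two-step chain closes the argument and no further calculation is needed.
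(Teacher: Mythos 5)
Your proposal matches the paper's proof exactly: the paper also derives the result by chaining Lemma~\ref{lem:abs:qprop:soundness} (to get equivalence of the abstractions) with Lemma~\ref{lem:abs:equiv:implies:qbf:equiv} (to lift that equivalence back to the original prefix). Your additional verification of the side conditions is a careful touch but introduces nothing beyond what the paper's one-line proof relies on implicitly.
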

\begin{proof}
By Lemma~\ref{lem:abs:qprop:soundness} and Lemma~\ref{lem:abs:equiv:implies:qbf:equiv}.
\qed
\end{proof}

Lemma~\ref{lem:abs:treemodel:soundness} provides us with the necessary
theoretical foundation to lift $\at$ (Definition~\ref{def:at}) from
UP, which is applied to CNFs, to QUP, which is applied to \emph{suitable
abstractions} of QBFs. The abstractions are constructed depending on
the maximum nesting level of variables in the clause we want to
check. 

\begin{definition}[$\qat$]
\label{def:qat} Let $\qclauset$ be a QBF, $C$ a clause,
and $i = \max(\levels(\prefix, C))$ 
Clause $C$ has property $\qat$ \emph{(quantified asymmetric
tautology)} with respect to $\qclauset$ iff $\eabs(\qclauset, i) \unitQprop
C$.
\end{definition}

As an immediate consequence from the definition of QUP
(Definition~\ref{def:qprop}) and Lemma~\ref{lem:abs:qprop:soundness},
we can conclude that a clause $C$ has $\qat$ with respect to a QBF
$\prefix. \clauset$ if QUP derives the empty clause from the suitable
abstraction of $\prefix. \clauset$ with respect to $C$ (i.e.,
$\eabs(\prefix. (\clauset \wedge \overline{C}), i) \unitQprop
\emptyset$). Further, if $C$ has $\qat$ then we have
$\prefix. \clauset \treemodelequiv \prefix. (\clauset \wedge C)$ by
Lemma~\ref{lem:abs:treemodel:soundness}, i.e., $C$ is implied by
the given QBF $\prefix. \clauset$.

\begin{example}
\label{ex:at:qat}
Given the QBF $\qclauset := \prefix. \clauset$ with $\prefix :=
\forall u_1 \exists x_3 \forall u_2 \exists x_4$ and $\clauset := (u_1
\vee \bar x_3) \wedge (u_1 \vee \bar x_3 \vee x_4) \wedge (\bar u_2
\vee \bar x_4)$.  Clause $(u_1 \vee \bar x_3)$ has $\qat$ with respect
to $\eabs(\qclauset, 2)$ with $\max(\levels(C)) = 2$ since $\forall
u_2$ is still universal in the abstraction. By QUP clause $(u_1 \vee
\bar x_3 \vee x_4)$ becomes unit and clause $(\bar u_2 \vee \bar x_4)$
becomes empty by UR. However, clause $(u_1 \vee \bar x_3)$ does
\emph{not} have $\at$ since $\forall u_2$ is treated as an existential
variable in UP, hence clause $(\bar u_2 \vee \bar x_4)$ does not
become empty by UR.
\end{example}

In contrast to $\at$, $\qat$ is aware of quantifier structures in
QBFs as shown in Example~\ref{ex:at:qat}. We now generalize
$\qrat$ to $\newqrat$ by replacing $\at$ by $\qat$. Similarly,
we generalize $\qior$ to $\newqior$ by replacing propositional
implication ($\models$) and equivalence
(Proposition~\ref{prop:sat:equiv:implies:qbf:equiv}), by \emph{QBF
implication and equivalence} (Lemma~\ref{lem:abs:treemodel:soundness}).

\begin{definition}[$\newqrat$]
\label{def:new:qrat}
Clause $C$ has property $\newqrat$  on literal $l \in C$ with respect to QBF
$\prefix. \clauset$ iff, 
for all $D \in \clauset$ with $\bar l \in D$, the outer resolvent 
$\outerres(\prefix, C, D,l)$ has $\qat$ with
respect to QBF $\prefix. \clauset$.
\end{definition}

\begin{definition}[$\newqior$]
\label{def:new:qior}
Clause $C$ has property $\newqior$  on literal $l
\in C$ with respect to QBF $\prefix. \clauset$ iff $\prefix. \clauset \treemodelequiv \prefix. (\clauset \wedge \outerres(\prefix, C, D, l))$ for all $D
\in \clauset$ with $\bar l \in D$. 
\end{definition}

If a clause has $\qrat$ then it also has $\newqrat$. Moreover, due to
Proposition~\ref{prop:qior:qbf:model:equiv}, if a clause has $\qior$ then it
also has $\newqior$. Hence $\newqrat$ and $\newqior$ indeed are
generalizations of $\qrat$ and $\qior$, which are strict, as we argue below.
The soundness of removing redundant clauses and universal literals based on
$\newqior$ (and on $\newqrat$) can be proved by the \emph{same}
arguments as original $\qrat$, which we outline in the
following. We refer to the appendix for full proofs.

\begin{definition}[prefix/suffix assignment~\cite{DBLP:journals/jar/HeuleSB17}]
\label{def:prefix:suffix}
For a QBF $\qclauset := \prefix. \clauset$ and a complete assignment $\tau$
in the assignment tree of $\qclauset$, the partial \emph{prefix}
and \emph{suffix assignments} of $\tau$ with respect to variable $x$,
denoted by $\tau^x$ and $\tau_x$, respectively, are defined as 
$\tau^x := \{y \mapsto \tau(y) \mid y \leq_\prefix x, y \not = x\}$ and
$\tau_x := \{y \mapsto \tau(y) \mid y \not \leq_\prefix x\}$.
\end{definition}

For a variable $x$ from block $B_i$ of a QBF, Definition~\ref{def:prefix:suffix} allows us to split a
complete assignment $\tau$ into three parts $\tau^xl\tau_x$, where the prefix assignment $\tau^x$ 
assigns variables (excluding $x$) from blocks smaller than or equal
to $B_i$, $l$ is a literal of $x$, and the suffix
assignment $\tau_x$ assigns variables from \mbox{blocks larger than $B_i$.}

Prefix and suffix assignments are important for proving the soundness of
satisfiability-preserving redundancy removal by $\newqior$ (and
$\qior$). Soundness is proved by showing that certain paths in a model
of a QBF can safely be modified based on prefix and suffix assignments, as stated in the following.

\begin{lemma}[cf.~Lemma~6 in~\cite{DBLP:journals/jar/HeuleSB17}]
\label{lem:outerres:satisfied}
Given a clause $C$ with $\newqior$  with respect to QBF
$\qclauset := \prefix. \clauset$ on literal $l \in C$ with $\var{l} = x$. Let 
$T$ be a model of $\qclauset$ and  $\tau \subseteq T$ be a path in $T$. If
$\tau(C \setminus \{l\}) =
\bot$ then $\tau^x(D) = \top$ for all $D \in \clauset$ \mbox{with $\bar l \in D$.}
\end{lemma}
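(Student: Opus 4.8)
The plan is to unfold the definition of $\newqior$ into a QBF implication and then trace the single path $\tau$ through it. By Definition~\ref{def:new:qior}, the hypothesis that $C$ has $\newqior$ on $l$ with respect to $\qclauset := \prefix. \clauset$ gives $\prefix. \clauset \treemodelequiv \prefix. (\clauset \wedge \outerres(\prefix, C, D, l))$ for every $D \in \clauset$ with $\bar l \in D$; in particular the left-to-right direction $\prefix. \clauset \qbfmodels \prefix. (\clauset \wedge \outerres(\prefix, C, D, l))$ holds. Since these two QBFs differ only in their matrix and share the prefix $\prefix$, the tree $T$ is a pre-model of both, so from $T \qbfmodels \prefix. \clauset$ and the implication I obtain $T \qbfmodels \prefix. (\clauset \wedge \outerres(\prefix, C, D, l))$. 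Fixing an arbitrary such $D$, every leaf of a model satisfies its matrix, so in particular the complete path $\tau$ satisfies $\tau(\outerres(\prefix, C, D, l)) = \top$.

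Next I would decompose the outer resolvent. By definition $\outerres(\prefix, C, D, l) = (C \setminus \{l\}) \cup \outercl(\prefix, D, \bar l)$. The hypothesis supplies $\tau(C \setminus \{l\}) = \bot$, so the literal of the resolvent satisfied by $\tau$ must come from the outer-clause part, i.e.\ $\tau(\outercl(\prefix, D, \bar l)) = \top$.

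The crux is then a scoping argument showing that this satisfaction is already witnessed by the prefix assignment $\tau^x$. Every literal $k \in \outercl(\prefix, D, \bar l)$ satisfies $k \leq_\prefix \bar l$ and $k \not= \bar l$ by definition of the outer clause, hence $\var{k} \leq_\prefix x$ where $\var{l} = x$. Moreover $\var{k} \not= x$: the only literals over $x$ are $l$ and $\bar l$; we have $k \not= \bar l$, and $k \not= l$ because $D$ contains $\bar l$ and is non-tautological, so $l \notin D$ while $k \in D$. Thus every variable occurring in $\outercl(\prefix, D, \bar l)$ lies in the domain $\{y \mid y \leq_\prefix x,\, y \not= x\}$ of $\tau^x$, so $\tau$ and $\tau^x$ agree on it and $\tau^x(\outercl(\prefix, D, \bar l)) = \tau(\outercl(\prefix, D, \bar l)) = \top$. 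Since $\outercl(\prefix, D, \bar l) \subseteq D$, the same satisfied literal lies in $D$, giving $\tau^x(D) = \top$; as $D$ was arbitrary, this proves the claim.

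I expect the scoping step to be the only delicate part of the argument: one must verify both that the outer-clause literals are bounded by $x$ in $\leq_\prefix$ and that none of them is over $x$ itself, and it is exactly the standing non-tautology assumption on clauses that rules out $l \in D$ and hence $l \in \outercl(\prefix, D, \bar l)$. The passage from the model $T$ to satisfaction of the resolvent along $\tau$ is routine once one observes that the two QBFs share the prefix $\prefix$, so that $T$ is simultaneously a pre-model of both and the $\qbfmodels$ direction of $\treemodelequiv$ transfers modelhood directly.
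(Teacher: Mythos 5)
Your proof is correct and follows essentially the same route as the paper's: unfold $\newqior$ into the equivalence $\prefix. \clauset \treemodelequiv \prefix. (\clauset \wedge \outerres(\prefix, C, D, l))$, transfer modelhood of $T$ to the extended QBF, use $\tau(C \setminus \{l\}) = \bot$ to force $\tau(\outercl(\prefix, D, \bar l)) = \top$, and then argue that the outer clause is already satisfied by the prefix assignment $\tau^x$ since $\outercl(\prefix, D, \bar l) \subseteq D$. The only difference is that you are slightly more explicit at the scoping step, invoking the standing non-tautology assumption to rule out $l \in \outercl(\prefix, D, \bar l)$, a detail the paper's proof leaves implicit; this is a welcome refinement, not a different argument.
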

\begin{proof}[sketch, see appendix]
Let $D \in \clauset$ be a clause with $\bar l \in D$ and $R :=
\outerres(\prefix,C,D,l) = (C \setminus \{l\}) \cup \outercl(\prefix, D, \bar
l)$.  By Definition~\ref{def:new:qior}, we have $\prefix. \clauset
\treemodelequiv \prefix. (\clauset \wedge \outerres(\prefix, C, D, l))$ for
all $D \in \clauset$ with $\bar l \in D$.  The rest of the proof considers a
path $\tau$ in $T$ and works in the same way as 
the proof of Lemma~6 in~\cite{DBLP:journals/jar/HeuleSB17}.
\qed
\end{proof}

\begin{theorem}
\label{thm:newqior:clause:soundness}
Given a QBF $\qclauset := \prefix. \clauset$ and a clause $C \in \clauset$ with $\newqior$ on
an \emph{existential} literal $l \in C$ with respect to QBF
$\qclauset' := \prefix. \clauset'$ where $\clauset' := \clauset \setminus \{C\}$. 
Then $\qclauset \satequiv
\qclauset'$. 
\end{theorem}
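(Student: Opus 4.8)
The plan is to establish the only direction required by the definition of $\satequiv$, namely that $\qclauset$ is satisfiable whenever $\qclauset'$ is. Since $\clauset' = \clauset \setminus \{C\} \subseteq \clauset$, every model of $\qclauset$ is already a model of $\qclauset'$, so the converse is immediate and I may concentrate on the forward implication. To this end I would take an arbitrary model $T'$ of $\qclauset' := \prefix. \clauset'$ and surgically modify it into a model $T$ of $\qclauset := \prefix.(\clauset' \cup \{C\})$, changing only the strategy of the existential variable $x := \var{l}$. This is the QBF analogue of the classical RAT model-repair argument and is exactly the scheme underlying the original $\qrat$ clause elimination (Theorem~\ref{thm:orig:qior:clause:soundness}); the adaptation to $\newqior$ is isolated entirely in Lemma~\ref{lem:outerres:satisfied}.

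For the construction I would process each node $n$ of $T'$ associated with $x$ independently. Such a node is reached by a unique prefix assignment $\sigma$ (the partial assignment $\tau^x$ shared by all paths through $n$), and since $x$ is existential exactly one child of $n$, together with its suffix subtree $S$, lies in $T'$. If the selected child already makes $l = \top$, or if it makes $l = \bot$ but every leaf of $S$ still satisfies $C$ (equivalently $C \setminus \{l\}$), I keep the subtree unchanged. Otherwise the selected child makes $l = \bot$ while some leaf of $S$ falsifies $C$; in that case I flip the choice at $n$ to the sibling making $l = \top$ and graft the same suffix subtree $S$ underneath it (which is legitimate because $S$ assigns only variables after $x$ and is independent of the edge label at $n$). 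The resulting $T$ is structurally a pre-model of $\qclauset$: universal nodes retain both children, existential nodes including the $x$-nodes retain exactly one, and only the edge labels at $x$-nodes and the clause labels at the leaves change.

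It then remains to check that every leaf of $T$ is labelled $\top$, i.e.\ satisfies $\clauset' \cup \{C\}$. Clause $C$ is immediate: after the surgery every path either assigns $l = \top$ or lies in an unflipped subtree all of whose leaves satisfy $C \setminus \{l\}$. For $\clauset'$ I only have to worry about a flipped subtree $S$, and there only about clauses $D \in \clauset'$ with $\bar l \in D$, since clauses containing $l$ are satisfied by $l = \top$ and clauses with no literal of $x$ retain their old, satisfied value. The apparent difficulty—and the crux of the whole argument—is that $S$ may also contain leaves on which $C$ was already satisfied through $C \setminus \{l\}$ and on which such a $D$ was satisfied \emph{only} by $\bar l$, so that flipping would seem to falsify $D$. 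This is precisely where Lemma~\ref{lem:outerres:satisfied} is decisive: because we flipped, some leaf $\tau_0$ of $S$ has $\tau_0(C \setminus \{l\}) = \bot$, whence the lemma yields $\tau_0^{x}(D) = \top$ for every $D \in \clauset'$ with $\bar l \in D$. Since $\tau_0^{x} = \sigma$ is the prefix assignment common to \emph{all} paths through $n$, every leaf of $S$ satisfies $D$ already on its outer part, by a literal whose variable precedes $x$ and whose value is therefore untouched by the flip. Hence all such $D$ remain satisfied throughout $S$, $T$ is a model of $\qclauset$, and $\qclauset \satequiv \qclauset'$ follows. The main obstacle is exactly this last step: converting the path-local hypothesis of Lemma~\ref{lem:outerres:satisfied} into a statement holding uniformly over the whole subtree rooted at the flipped $x$-node, which succeeds only because the lemma delivers satisfaction of $D$ by its \emph{prefix} part rather than merely along the full path.
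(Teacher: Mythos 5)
Your proposal is correct and follows essentially the same route as the paper's own proof: repair a model of $\qclauset'$ by flipping the existential choice at $x = \var{l}$ wherever $C$ is falsified, and invoke Lemma~\ref{lem:outerres:satisfied} to conclude that every $D \in \clauset'$ with $\bar l \in D$ remains satisfied through the prefix assignment, which is untouched by the flip. If anything, your subtree-level surgery (re-grafting the whole suffix subtree under the sibling child of each affected $x$-node) is a more careful rendering than the paper's per-path flipping, since it makes explicit both why the modified tree is still a pre-model and why leaves in the same subtree that already satisfied $C \setminus \{l\}$ cannot lose satisfaction of any such $D$.
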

\begin{proof}[sketch, see appendix] The proof relies on
Lemma~\ref{lem:outerres:satisfied} and works in the same way as the proof of
Theorem~7 in~\cite{DBLP:journals/jar/HeuleSB17}. A model $T$ of $\qclauset$
is obtained from a model $T'$ of $\qclauset'$ by flipping the assignment
of variable $x = \var{l}$ on a path $\tau$ in $T'$ to satisfy clause $C$. All $D \in
\clauset$ with $\bar l \in D$ are satisfied by such modified $\tau$.
\qed
\end{proof}

\begin{theorem}
\label{thm:newqior:literal:soundness}
Given a QBF $\qclauset_0 := \prefix. \clauset$ and $\qclauset :=
\prefix. (\clauset \cup \{C\})$ where $C$ has $\newqior$ on a \emph{universal}
literal $l \in C$ with respect to $\qclauset_0$. Let $\qclauset' :=
\prefix. (\clauset \cup \{C'\})$ with $C' := C \setminus \{l\}$. Then $\qclauset
\satequiv \qclauset'$.
\end{theorem}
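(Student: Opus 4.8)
The plan is to establish $\qclauset \satequiv \qclauset'$ by treating its two directions separately. One direction is immediate: since $C' = C \setminus \{l\} \subseteq C$, every path satisfying $C'$ also satisfies $C$, so any model of the logically stronger $\qclauset' = \prefix.(\clauset \cup \{C'\})$ is already a model of $\qclauset = \prefix.(\clauset \cup \{C\})$. The substance of the theorem lies in the converse: from a model $T$ of $\qclauset$ I must construct a model $T'$ of $\qclauset'$, i.e.\ one in which the shortened clause $C'$, which has lost the universal literal $l$, is satisfied on every path.

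The first step is to note that, because $\clauset \subseteq \clauset \cup \{C\}$, the given $T$ is also a model of $\qclauset_0 = \prefix.\clauset$, so Lemma~\ref{lem:outerres:satisfied} applies to $T$ with $x := \var{l}$. This supplies exactly the invariant the surgery needs: on every path $\tau \subseteq T$ with $\tau(C \setminus \{l\}) = \bot$ (the only troublesome paths, where $C$ is satisfied solely through $l$) the prefix assignment $\tau^x$ already satisfies every $D \in \clauset$ with $\bar l \in D$.

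Next I would build $T'$ by local subtree replacement at the universal $x$-nodes of $T$, where both children are present. Below such a node lie the $l$-subtree ($x$ set so that $l$ is true) and the $\bar l$-subtree ($\bar l$ true); on the latter $l$ is false, so $C$ being satisfied already forces $C'$, and these subtrees are kept. For the $l$-subtree I apply a decision rule that reads off only the prefix $p$ reaching the node: if $p$ satisfies every $D \in \clauset$ with $\bar l \in D$, graft a copy of the $\bar l$-subtree onto the $l$-branch (keeping $x$ universal, so $l$ is still the branch label while the existential responses below come from the $\bar l$-subtree); otherwise keep the original $l$-subtree. Because the rule depends only on $p$, the result $T'$ is again a legal pre-model, with $x$ still branching and every existential node retaining a single child.

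Finally I would verify that every path of $T'$ satisfies $\clauset \cup \{C'\}$. The $\bar l$-paths are fine as noted; for a non-grafted $l$-node the contrapositive of Lemma~\ref{lem:outerres:satisfied} shows no original $l$-path there had $C'$ false, so the kept subtree already satisfies $C'$. For a grafted $l$-path $\tau$ I compare it with the matching $\bar l$-path $\tau'$ sharing the same prefix and suffix, so that $\tau$ and $\tau'$ differ only in $x$: then $\tau(C') = \tau'(C') = \top$ since $C'$ mentions no literal of $x$, clauses containing $l$ hold through $l$, clauses with no literal of $x$ inherit satisfaction from $\tau'$, and clauses $D$ with $\bar l \in D$ hold because grafting fired precisely when $p = \tau^x$ satisfies all such $D$ on the prefix alone. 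I expect this last case to be the main obstacle: grafting overwrites the suffix and could destroy a $\bar l$-clause that the $\bar l$-subtree satisfied only through $\bar l$, and it is exactly the $\newqior$ hypothesis, routed through Lemma~\ref{lem:outerres:satisfied}, together with phrasing the graft/keep decision purely in terms of $p$, that simultaneously rescues those clauses and keeps $T'$ a valid model. This parallels the soundness proof for universal-literal elimination in the original $\qrat$ system, now carried out with QBF rather than propositional implication.
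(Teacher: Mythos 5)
Your proof is correct and follows essentially the same route as the paper's: the easy direction by clause inclusion, and the hard direction by grafting copies of the $\bar l$-subtree onto the $l$-branch at the universal $x$-nodes of a model $T$ of $\qclauset$, with Lemma~\ref{lem:outerres:satisfied} (applied after observing that $T$ is also a model of $\qclauset_0$) guaranteeing that clauses containing $\bar l$ remain satisfied after the graft. The only difference is in the trigger: the paper replaces the subtree under $\tau^x l$ exactly when some path below it falsifies $C'$, whereas you graft whenever the prefix alone satisfies all clauses containing $\bar l$ and handle the kept subtrees via the contrapositive of the lemma --- an equivalent bookkeeping choice whose prefix-local formulation has the minor advantage of making well-definedness of the resulting pre-model immediate.
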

\begin{proof}[sketch, see appendix]
The proof relies on
Lemma~\ref{lem:outerres:satisfied} and works in the same way as the proof of Theorem~8
in~\cite{DBLP:journals/jar/HeuleSB17}. A model $T'$ of $\qclauset'$
is obtained from a model $T$ of $\qclauset$ by modifying the subtree under the
node associated to variable $x = \var{l}$. Suffix assignments of some paths
$\tau$ in $T$ are used to construct modified paths in $T'$ under which clause $C'$ is
satisfied. All $D \in
\clauset$ with $\bar l \in D$ are still satisfied after such modifications.
\qed
\end{proof}

Analogously to the $\qrat$ proof system that is based on the $\qrat$ redundancy
property (Definition~\ref{def:orig:qrat}), we obtain the \emph{$\newqrat$ proof
system} based on property $\newqrat$ (Definition~\ref{def:new:qrat}). The
system consists of rewrite rules $\textnewqrate$, $\textnewqrata$, and
$\textnewqratu$ to eliminate or add redundant clauses, and to eliminate redundant
universal literals. On a conceptual level, these rules in $\newqrat$ are similar to their respective
counterparts in the $\qrat$ system. The extended universal reduction rule
$\texteur$ is the same in the $\qrat$ and $\newqrat$ systems. In contrast to
$\qrat$, $\newqrat$ is aware of  quantifier structures of QBFs because it
relies on the QBF specific property $\qat$ and QUP instead of on propositional $\at$ and UP.

The $\newqrat$ system has the same desirable properties as the original
$\qrat$ system. $\newqrat$ \emph{simulates} virtually all inference rules
applied in QBF reasoning tools and it is based on redundancy
property $\newqrat$ that can be checked in \emph{polynomial time} by QUP. 
Further, $\newqrat$ allows to represent proofs in the
\emph{same proof format} as
$\qrat$. However, proof checking, i.e., checking whether a clause listed in
the proof has $\newqrat$ on a literal, must be adapted to the use of
QBF abstractions and QUP. Consequently, the available $\qrat$ proof
checker
\textsf{QRATtrim}~\cite{DBLP:journals/jar/HeuleSB17} 
cannot be used out of the box to check $\newqrat$ proofs.

Notably, \emph{Skolem functions} can be extracted from $\newqrat$
proofs of satisfiable QBFs in the \emph{same} way as in $\qrat$
(consequence of Theorem~\ref{thm:newqior:clause:soundness},
cf.~Corollaries~26 and~27
in~\cite{DBLP:journals/jar/HeuleSB17}). 
Hence like
$\qrat$, $\newqrat$ can be
integrated in complete QBF workflows that include preprocessing,
solving, and Skolem function extraction~\cite{DBLP:conf/tap/FazekasHSB17}.


\section{Exemplifying the Power of $\newqrat$} \label{sec:example}

In the following, we point out that the $\newqrat$ system is more powerful
than $\qrat$ in terms of redundancy detection. In particular, we show that
the rules $\textnewqrate$ and $\textnewqratu$ in the
$\newqrat$ system can eliminate certain redundancies that their counterparts
$\textqrate$ and $\textqratu$ cannot eliminate.

\begin{definition}\label{def:formula:class}

For $n \geq 1$, let $\FormulasC(n) := \prefixC(n). \clausetC(n)$ be a class of QBFs with prefix
$\prefixC(n)$ and CNF $\clausetC(n)$ defined as follows.

\medskip

\hspace*{-0.5cm}
\begin{minipage}[t]{0.425\textwidth}
$\prefixC(n) := \exists B_1 \forall B_2 \exists B_3 \forall B_4 \exists B_5$:

\begin{tabular}{ll}
$B_1 := \{ x_{4i + 1}, x_{4i + 2} \mid 0 \leq i < n\}$ & \\
$B_2 := \{ u_{2i + 1} \mid 0 \leq i < n \}$ & \\
$B_3 := \{ x_{4i + 3} \mid 0 \leq i < n\}$ & \\
$B_4 := \{ u_{2i + 2} \mid 0 \leq i < n \}$ & \\
$B_5 := \{ x_{4i + 4} \mid 0 \leq i < n \}$ & \\
\end{tabular}
\end{minipage}
\begin{minipage}[t]{0.55\textwidth}
$\clausetC(n) := \bigwedge_{i := 0}^{n-1} \mathcal{C}(i)$ with $\mathcal{C}(i)
:= \bigwedge_{j := 0}^{6} C_{i,j}$:

\begin{tabular}{ll}
$C_{i,0} := $ & $(x_{4i + 1} \vee u_{2i + 1} \vee \neg x_{4i + 3} )$ \\
$C_{i,1} := $ & $(x_{4i + 2} \vee \neg u_{2i + 1} \vee  x_{4i + 3}) $ \\
$C_{i,2} := $ & $(\neg x_{4i + 1} \vee \neg u_{2i + 1} \vee \neg x_{4i + 3}) $ \\
$C_{i,3} := $ & $(\neg x_{4i + 2} \vee u_{2i + 1} \vee  x_{4i + 3})  $  \\
$C_{i,4} := $ & $(u_{2i + 1} \vee \neg x_{4i + 3} \vee x_{4i + 4}) $  \\
$C_{i,5} := $ & $(\neg u_{2i + 2} \vee \neg x_{4i + 4}) $ \\
$C_{i,6} := $ & $(\neg x_{4i + 1} \vee u_{2i + 2} \vee \neg x_{4i + 4})$ \\
\end{tabular}
\end{minipage}

\end{definition}

\begin{example} \label{ex:formula:class:length:one}
For $n := 1$, we have $\FormulasC(n)$ with prefix $\prefixC(n) := 
\exists  x_{1}, x_{2} 
\forall  u_{1} 
\exists  x_{3} 
\forall  u_{2} 
\exists  x_{4} 
$
and CNF $\clausetC(n) := \mathcal{C}(0)$ with $\mathcal{C}(0)
:= \bigwedge_{j := 0}^{6} C_{0,j}$ as follows.

\begin{minipage}[t]{0.5\textwidth}
\hphantom{dummy}

\begin{tabular}{ll}
$C_{0,0} := $ & $(x_{1} \vee u_{1} \vee \neg x_{3} )$ \\
$C_{0,1} := $ & $(x_{2} \vee \neg u_{1} \vee  x_{3})$ \\
$C_{0,2} := $ & $(\neg x_{1} \vee \neg u_{1} \vee \neg x_{3})$ \\
$C_{0,3} := $ & $(\neg x_{2} \vee u_{1} \vee  x_{3})$  \\
\end{tabular}
\end{minipage}
\begin{minipage}[t]{0.5\textwidth}
\hphantom{dummy}

\begin{tabular}{ll}
$C_{0,4} := $ & $(u_{1} \vee \neg x_{3} \vee x_{4})$  \\
$C_{0,5} := $ & $(\neg u_{2} \vee \neg x_{4})$ \\
$C_{0,6} := $ & $(\neg x_{1} \vee u_{2} \vee \neg x_{4})$ \\
\end{tabular}
\end{minipage}

\end{example}

\begin{proposition} \label{prop:stronger:qrat}
For $n \geq 1$, $\textnewqrate$ can eliminate all clauses in $\FormulasC(n)$
whereas $\textqrate$ cannot eliminate any clause in $\FormulasC(n)$.
\end{proposition}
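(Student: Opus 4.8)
The plan is to prove the two halves of the proposition separately, exploiting the asymmetry built into the formula class $\FormulasC(n)$ by Definition~\ref{def:formula:class}. Since the clause sets $\mathcal{C}(i)$ for distinct $i$ share no variables, the whole formula decomposes into $n$ independent copies, so it suffices to analyze a single block $\mathcal{C}(0)$ as displayed in Example~\ref{ex:formula:class:length:one} and then argue that elimination in one block does not disturb the others. I would first establish the negative half, namely that $\textqrate$ cannot eliminate \emph{any} clause. For each clause $C \in \mathcal{C}(0)$ and each literal $l \in C$, one has to exhibit a clause $D$ with $\bar l \in D$ whose outer resolvent $\outerres(\prefix, C, D, l)$ fails to have $\at$ with respect to the remaining CNF; concretely this means checking that propositional UP on $\clausetC \setminus \{C\}$ together with the negated outer resolvent does \emph{not} derive the empty clause. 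The point of the construction is that the universal literals $u_1, u_2$ are treated as ordinary propositional atoms by UP, so the relevant unit propagations never fire and the empty clause is never produced.

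For the positive half, I would use the $\qat$ machinery of Definition~\ref{def:qat}: to show $\textnewqrate$ eliminates a clause $C$, I need $C$ to have $\newqrat$ on some existential literal $l$, which by Definition~\ref{def:new:qrat} reduces to each outer resolvent $\outerres(\prefix, C, D, l)$ having $\qat$ with respect to the QBF with $C$ removed. The key difference from the $\at$ analysis is that $\qat$ runs QUP on the abstraction $\eabs(\qclauset, i)$ with $i = \max(\levels(\prefix, C))$, where universal reduction (UR) can now eliminate the universal literals $u_1, u_2$ whenever they are maximal in a clause. The intended behaviour, mirroring Example~\ref{ex:at:qat}, is that after propagating the negated outer resolvent, certain clauses collapse to units via UR and these units cascade to the empty clause. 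I would order the eliminations so that clauses are removed one at a time, verifying at each step that the clause targeted still has $\newqrat$ with respect to the current (shrinking) formula, and continue until all seven clauses of each $\mathcal{C}(i)$ are gone.

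The main obstacle I anticipate is the bookkeeping of the elimination order together with the abstraction level, since $\qat$ checks are performed relative to the \emph{current} clause set and the suitable abstraction $\eabs(\qclauset, i)$ changes with the maximum nesting level of the clause under consideration. One must be careful that a clause chosen as redundant genuinely has all its outer resolvents pass the QUP-on-abstraction test against whatever remains, and that the UR steps which make the argument work are actually licensed by Definition~\ref{def:forallred} (i.e. the universal literal really is maximal in the clause at the moment UR is applied). Because the blocks are variable-disjoint, I would reduce the general $n$ case to $n = 1$ by the independence argument: removing clauses from $\mathcal{C}(i)$ neither affects nor depends on the clauses of $\mathcal{C}(j)$ for $j \neq i$, so the per-block analysis of $\mathcal{C}(0)$ lifts verbatim to each $i$, and iterating over all $i$ removes the entire CNF. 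The negative claim lifts identically, since the witness $D$ blocking $\at$ for a clause in $\mathcal{C}(i)$ lives in the same block and UP cannot cross block boundaries to help either.
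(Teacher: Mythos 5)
Your framework coincides with the paper's: decompose $\FormulasC(n)$ into the variable-disjoint blocks $\mathcal{C}(i)$, prove the positive half by $\qat$ checks (QUP with UR on a suitable abstraction), and the negative half by arguing that UP, being blind to quantifiers, never derives the empty clause. However, your proposal stops exactly where the content of the proposition begins: you never exhibit a clause and literal on which elimination can start, never compute an outer resolvent, and never trace a single QUP derivation. This is not mere bookkeeping, because the choice of starting point is genuinely constrained: most clauses of $\mathcal{C}(i)$ do \emph{not} have $\newqrat$ with respect to the initial formula. For instance, $C_{i,5} = (\neg u_{2i+2} \vee \neg x_{4i+4})$ has only one existential literal, $\neg x_{4i+4}$; its outer resolvent with $C_{i,4}$ is $(\neg u_{2i+2} \vee u_{2i+1} \vee \neg x_{4i+3})$, whose maximum nesting level is $4$, so the suitable abstraction makes \emph{every} variable existential, QUP degenerates to UP, and no empty clause is derived (propagation satisfies $C_{i,6}$ via the assignment $u_{2i+2}$). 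The paper's proof pivots on the specific discovery that $C_{i,0}$ has $\newqrat$ on $x_{4i+1}$: both relevant outer resolvents (with $C_{i,2}$ and $C_{i,6}$) equal $(u_{2i+1} \vee \neg x_{4i+3})$, whose maximum level is $3$, so that $u_{2i+2}$ remains universal in $\eabs(\FormulasC(n),3)$; propagating the negated resolvent makes $C_{i,4}$ unit, and then $C_{i,5}$ collapses to $(\neg u_{2i+2})$, which UR reduces to the empty clause. Only after this removal do $C_{i,2}$ and $C_{i,6}$ acquire $\newqrat$ on $\neg x_{4i+1}$ (vacuously, as no resolution partner on $x_{4i+1}$ remains), then $C_{i,1}$ on $x_{4i+3}$, and the rest follow trivially. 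Your plan to ``remove one clause at a time, verifying at each step'' presupposes that such an order exists and can be found, which is precisely what has to be demonstrated.

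A smaller but real imprecision concerns your negative half: you claim that under UP ``the relevant unit propagations never fire,'' so the empty clause is never produced. In fact the propagations do fire: for the check of $C_{i,0}$ on $x_{4i+1}$, UP makes $C_{i,4}$ unit, then $C_{i,5}$ unit, then $C_{i,6}$ unit, and only then stalls. The failure is not the absence of propagation but the absence of universal reduction: UP treats the derived unit $(\neg u_{2i+2})$ as a clause to propagate rather than reducing it to the empty clause, which is the paper's stated reason ($C_{i,5}$ never becomes empty since all variables are treated as existential). The negative half then still requires the case analysis over all clause/literal pairs, which both you and the paper defer, but your justification for why every case fails needs to be repaired along these lines.
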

\begin{proof}[sketch]
For $i$ and $k$ with $i \not = k$, the sets of
variables in $\mathcal{C}(i)$ and $\mathcal{C}(k)$ are disjoint.  Thus
it suffices to prove the claim for an arbitrary $\mathcal{C}(i)$. 
Clause $C_{i,0}$ has $\newqrat$ on literal $x_{4i + 1}$ and can be
removed. The relevant outer resolvents are $\outerres_{0,2} =
\outerres(\prefixC(n), C_{i,0}, C_{i,2}, x_{4i + 1})$ and
$\outerres_{0,6} = \outerres(\prefixC(n), C_{i,0}, C_{i,6}, x_{4i +
  1})$, and we have $\outerres_{0,2} = \outerres_{0,6} = (u_{2i + 1}
\vee \neg x_{4i + 3})$.  Since $\max(\levels(\outerres_{0,2})) =
\max(\levels(\outerres_{0,6})) = 3$, we apply QUP to the abstraction
$\eabs(\FormulasC(n), 3)$. Note that variable $u_{2i + 2}$ from block
$B_4$ still is universal in the prefix of $\eabs(\FormulasC(n), 3)$.
Propagating $\overline{\outerres_{0,2}}$ and
$\overline{\outerres_{0,6}}$, respectively, in either case makes 
$C_{i,4}$ unit, finally $C_{i,5}$ becomes empty under the
derived assignment $x_{4i + 4}$ since UR reduces the literal $\neg
u_{2i + 2}$. After removing $C_{i,0}$, 
clauses $C_{i,2}$ and $C_{i,6}$ trivially have $\newqrat$ on $\neg
x_{4i + 1}$.  Then $C_{i,1}$ has $\newqrat$
on $x_{4i + 3}$. Finally, the remaining clauses trivially have
$\newqrat$.
In contrast to that, $\textqrate$ cannot eliminate any clause in
$\FormulasC(n)$. Clause $C_{i,5}$ does not become empty by UP since all variables are existential. The claim can
be proved by case analysis of all possible outer
resolvents.  
\qed
\end{proof}


\begin{definition}\label{def:formula:class:literal:elimination}

For $n \geq 1$, let $\FormulasL(n) := \prefixL(n). \clausetL(n)$ be a class of QBFs with prefix
$\prefixL(n)$ and CNF $\clausetL(n)$ defined as follows.

\medskip

\hspace*{-0.5cm}
\begin{minipage}[t]{0.425\textwidth}
$\prefixL(n) := \forall B_1 \exists B_2 \forall B_3 \exists B_4$:

\begin{tabular}{ll}
$B_1 := \{ u_{3i + 1}, u_{3i + 2} \mid 0 \leq i < n\}$ & \\
$B_2 := \{ x_{3i + 1}, x_{3i + 2} \mid 0 \leq i < n \}$ & \\
\end{tabular}
\end{minipage}
\begin{minipage}[t]{0.55\textwidth}
\vphantom{$\prefixL(n) := \forall B_0 \exists B_1 \forall B_2 \exists B_3$:}

\begin{tabular}{ll}
$B_3 := \{ u_{3i + 3}  \mid 0 \leq i < n\}$ & \\
$B_4 := \{ x_{3i+3} \mid 0 \leq i < n \}$ & \\
\end{tabular}
\end{minipage}

\medskip

\hspace*{-0.5cm}
\begin{minipage}[t]{0.55\textwidth}
$\clausetL(n) := \bigwedge_{i := 0}^{n-1} \mathcal{C}(i)$ with $\mathcal{C}(i)
:= \bigwedge_{j := 0}^{7} C_{i,j}$:

\begin{tabular}{ll}
$C_{i,0} := $ & $(\neg u_{3i+2} \vee \neg x_{3i+1} \vee \neg x_{3i+2})$ \\
$C_{i,1} := $ & $(\neg u_{3i+1} \vee \neg x_{3i+1} \vee x_{3i+2}) $ \\
$C_{i,2} := $ & $(u_{3i+1} \vee x_{3i+1} \vee \neg x_{3i+2}) $ \\
$C_{i,3} := $ & $(u_{3i+2} \vee x_{3i+1} \vee x_{3i+2})  $  \\
\end{tabular}
\end{minipage}
\begin{minipage}[t]{0.425\textwidth}
\vphantom{$\clausetL(n) := \bigwedge_{i := 0}^{n-1}$}

\begin{tabular}{ll}
$C_{i,4} := $ & $(\neg x_{3i+1} \vee \neg x_{3i+2} \vee x_{3i+3}) $  \\
$C_{i,5} := $ & $(u_{3i + 3} \vee \neg x_{3i+3}) $ \\
$C_{i,6} := $ & $(\neg x_{3i+1} \vee x_{3i+2} \vee  \neg x_{3i+3})$ \\
$C_{i,7} := $ & $(\neg u_{3i + 3} \vee x_{3i+3})$ \\
\end{tabular}
\end{minipage}

\end{definition}

\begin{proposition}
\label{prop:stronger:qratu}
For $n \geq 1$, $\textnewqratu$ can eliminate the entire quantifier block $\forall B_1$ in $\FormulasL(n)$
whereas $\textqratu$ cannot eliminate any universal literals in $\FormulasL(n)$.
\end{proposition}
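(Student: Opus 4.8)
The plan is to follow the template of the proof of Proposition~\ref{prop:stronger:qrat}. Since the variable sets of $\mathcal{C}(i)$ and $\mathcal{C}(k)$ are disjoint for $i \neq k$, it suffices to remove the two universal variables $u_{3i+1}, u_{3i+2}$ of an arbitrary $\mathcal{C}(i)$, and repeating this for every $i$ empties the block $\forall B_1$. The pivotal structural fact is that in the abstraction $\eabs(\FormulasL(n), 2)$ the block $B_3$, and hence every $u_{3i+3}$, is \emph{still universal}, so universal reduction (UR) can fire inside QUP --- precisely the inference that the UP underlying $\at$ cannot make.

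For the positive claim I would give four $\textnewqratu$ steps. Clause $C_{i,1}$ has $\newqrat$ on $\neg u_{3i+1}$: its only opposite clause is $C_{i,2}$, and as $u_{3i+1}$ lies in the outermost block the outer clause is empty, so the unique outer resolvent is $(\neg x_{3i+1} \vee x_{3i+2})$, of maximum nesting level $2$. Propagating its negation in $\eabs(\FormulasL(n), 2)$ makes $C_{i,6}$ the unit $(\neg x_{3i+3})$ and then $C_{i,7}$ the unit $(\neg u_{3i+3})$, which UR reduces to $\emptyset$; thus the resolvent has $\qat$. Symmetrically, $C_{i,0}$ has $\newqrat$ on $\neg u_{3i+2}$ via $(\neg x_{3i+1} \vee \neg x_{3i+2})$, whose negation forces $x_{3i+3}$ through $C_{i,4}$ and collapses $C_{i,5}$ to $(u_{3i+3})$, again reduced to $\emptyset$. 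Once these two literals are gone, neither $\neg u_{3i+1}$ nor $\neg u_{3i+2}$ occurs anywhere, so $C_{i,2}$ has $\newqrat$ on $u_{3i+1}$ and $C_{i,3}$ has $\newqrat$ on $u_{3i+2}$ \emph{vacuously}, and both are deleted; this exhausts all occurrences of $u_{3i+1}, u_{3i+2}$.

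For the negative claim I would check that \emph{no} $\textqratu$ step applies to $\FormulasL(n)$, so the rule is blocked from the start. Every universal literal has a single opposite clause, hence one outer resolvent, and a short case analysis shows UP fails on each. On $\neg u_{3i+1} \in C_{i,1}$ and $\neg u_{3i+2} \in C_{i,0}$ the propagations above now stop at the unit $(\neg u_{3i+3})$ resp.\ $(u_{3i+3})$: without UR, UP only assigns $u_{3i+3}$ and halts, the partner clause $C_{i,5}$ resp.\ $C_{i,7}$ being already satisfied by the derived value of $x_{3i+3}$. On $u_{3i+1} \in C_{i,2}$ and $u_{3i+2} \in C_{i,3}$ the negated resolvents both set $x_{3i+1} = \bot$, so $\neg x_{3i+1}$ satisfies $C_{i,4}$ and $C_{i,6}$ and propagation never reaches $x_{3i+3}$; UP stalls at once. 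Finally, the two occurrences of $u_{3i+3}$ give the unit resolvents $(\neg x_{3i+3})$ and $(x_{3i+3})$, whose negations only shrink $C_{i,4}$ or $C_{i,6}$ to a binary clause and stall.

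The main obstacle is the negative direction: I must confirm that UP genuinely stalls for \emph{every} universal literal, which amounts to tracing each propagation chain to the point where the only remaining inference would be a UR on $u_{3i+3}$. This simultaneously pinpoints the single reason $\textnewqratu$ strictly dominates $\textqratu$ here: keeping $u_{3i+3}$ universal in $\eabs(\FormulasL(n), 2)$ lets QUP finish the two productive chains by UR and thereby unlock the vacuous removal of $u_{3i+1}, u_{3i+2}$, whereas UP reads $u_{3i+3}$ as existential, satisfies the dual clause, and never reaches $\emptyset$ on any literal.
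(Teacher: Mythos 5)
Your proposal is correct and takes essentially the same approach as the paper's proof: the same $\textnewqratu$ steps (eliminating $\neg u_{3i+1}$ and $\neg u_{3i+2}$ because QUP on the level-$2$ abstraction closes the propagation chains via universal reduction on $u_{3i+3}$ in $C_{i,7}$ resp.\ $C_{i,5}$, after which the positive occurrences $u_{3i+1}$, $u_{3i+2}$ are removed vacuously), and the same negative argument that UP stalls on every possible outer resolvent since $u_{3i+3}$ is treated as existential. You merely carry out explicitly the case analysis that the paper's sketch leaves implicit, and you process the two literal pairs in the opposite order, which makes no difference.
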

\begin{proof}[sketch, see appendix]
Formulas $\FormulasL(n)$ are constructed based on a similar principle
as $\FormulasC(n)$ in Definition~\ref{def:formula:class}.  
E.g., clauses $C_{i,0}$ and $C_{i,1}$ have $\newqrat$ but not $\qrat$ on
literals $\neg u_{3i+2}$ and $\neg u_{3i+1}$. During QUP, clauses $C_{i,5}$
and $C_{i,7}$ become empty only due to UR, which is not possible when using UP.
\qed
\end{proof}


\section{Proof Theoretical Impact of $\qrat$ and $\newqrat$} 
\label{sec:calculi}

As argued in the context of \emph{interference-based proof
systems}~\cite{DBLP:conf/cade/HeuleK17}, certain proof steps may become
applicable in a proof system only after redundant parts of the formula have
been eliminated.  We show that redundancy elimination by $\newqrat$ or $\qrat$
can lead to exponentially shorter proofs in the resolution based
\lquplusres~\cite{DBLP:conf/sat/BalabanovWJ14} QBF calculus.  Note that 
we do not compare the power of $\qrat$ or $\newqrat$ as proof systems
themselves, but  the impact of redundancy elimination on other proof
systems. The following result relies only on $\textqratu$, i.e., it does not require the
more powerful \mbox{redundancy property $\textnewqratu$ in $\newqrat$.}

\lquplusres is a calculus that generalizes
traditional Q-resolution~\cite{DBLP:journals/iandc/BuningKF95}. It allows to
generate resolvents on both existential and universal variables and admits
tautological resolvents of a certain kind. \lquplusres is among the strongest
resolution calculi currently
known~\cite{DBLP:conf/sat/BalabanovWJ14,beyersdorff_et_al:LIPIcs:2015:4905},
yet the following class of QBFs provides an exponential lower bound on the size of
\lquplusres proofs.

\begin{definition}[\cite{beyersdorff_et_al:LIPIcs:2015:4905}]
\label{def:lquplus:lower:bound}
For $n > 1$, let $\formulasQUParity(n) := \prefixQUParity(n). \clausetQUParity(n)$ be the
\emph{QUParity} QBFs with 
$\prefixQUParity(n) := \exists x_1, \ldots, x_n \forall z_1, z_2 \exists t_2,
\ldots, t_n$
and 
$\clausetQUParity(n) := C_0 \wedge C_1 \wedge \bigwedge_{i := 2}^{n} \mathcal{C}(i)$ 
where
$C_0 := (z_1 \vee z_2 \vee t_n)$, 
$C_1 := (\bar z_1 \vee \bar z_2 \vee \bar t_n)$, 
and 
$\mathcal{C}(i) := \bigwedge_{j := 0}^{7} C_{i,j}$:

\medskip

\begin{minipage}[t]{0.49\textwidth}
\begin{tabular}{ll}
$C_{2,0} := $ & $(\bar x_1 \vee \bar x_2 \vee z_1 \vee z_2 \vee \bar t_2)$ \\
$C_{2,1} := $ & $(x_1 \vee x_2 \vee  z_1 \vee z_2 \vee \bar t_2) $ \\
$C_{2,2} := $ & $(\bar x_1 \vee x_2 \vee z_1 \vee z_2 \vee t_2) $ \\
$C_{2,3} := $ & $(x_1 \vee \bar x_2 \vee z_1 \vee z_2 \vee t_2) $  \\
\end{tabular}
\begin{tabular}{ll}
$C_{2,4} := $ & $(\bar x_1 \vee \bar x_2 \vee \bar z_1 \vee \bar z_2 \vee \bar t_2)$ \\
$C_{2,5} := $ & $(x_1 \vee x_2 \vee \bar z_1 \vee \bar z_2 \vee \bar t_2) $ \\
$C_{2,6} := $ & $(\bar x_1 \vee x_2 \vee \bar z_1 \vee \bar z_2 \vee t_2) $ \\
$C_{2,7} := $ & $(x_1 \vee \bar x_2 \vee \bar z_1 \vee \bar z_2 \vee t_2) $  \\
\end{tabular}
\end{minipage}
\begin{minipage}[t]{0.49\textwidth}
\begin{tabular}{ll}
$C_{i,0} := $ & $(\bar t_{i-1} \vee \bar x_i \vee z_1 \vee z_2 \vee \bar t_i)$ \\
$C_{i,1} := $ & $(t_{i-1} \vee x_i \vee z_1 \vee z_2 \vee \bar t_i) $ \\
$C_{i,2} := $ & $(\bar t_{i-1} \vee x_i \vee z_1 \vee z_2 \vee t_i) $ \\
$C_{i,3} := $ & $(t_{i-1} \vee \bar x_i \vee z_1 \vee z_2 \vee t_i) $  \\
\end{tabular}
\begin{tabular}{ll}
$C_{i,4} := $ & $(\bar t_{i-1} \vee \bar x_i \vee \bar z_1 \vee \bar z_2 \vee \bar t_i)$ \\
$C_{i,5} := $ & $(t_{i-1} \vee x_i \vee \bar z_1 \vee \bar z_2 \vee \bar t_i) $ \\
$C_{i,6} := $ & $(\bar t_{i-1} \vee x_i \vee \bar z_1 \vee \bar z_2 \vee t_i) $ \\
$C_{i,7} := $ & $(t_{i-1} \vee \bar x_i \vee \bar z_1 \vee \bar z_2 \vee t_i) $  \\
\end{tabular}
\end{minipage}
\end{definition}

Any refutation of $\formulasQUParity(n)$ in \lquplusres is exponential in
$n$~\cite{beyersdorff_et_al:LIPIcs:2015:4905}. The QUParity formulas are a modification of the related
\emph{LQParity} formulas~\cite{beyersdorff_et_al:LIPIcs:2015:4905}. An
LQParity formula is obtained from a QUParity
formula $\formulasQUParity(n)$  by
replacing  $\forall z_1,z_2$ in prefix  $\prefixQUParity(n)$ by $\forall z$
and by replacing every occurrence of the literal pairs $z_1 \vee z_2$ and $\bar z_1
\vee \bar z_2$ in the clauses in $\clausetQUParity(n)$ by the literal $z$ and
$\bar z$, respectively.

\begin{proposition}
$\textqratu$ can eliminate either variable $z_1$ or $z_2$ from
  a QUParity formula $\formulasQUParity(n)$ to obtain a related LQParity formula in polynomial time.
\end{proposition}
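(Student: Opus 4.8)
The plan is to fix the (arbitrary, hence freely chosen) ordering of the universal block $\forall z_1, z_2$ so that $z_1 \leq_\prefix z_2$, and then to remove $z_2$ from every clause of $\formulasQUParity(n)$ by repeated application of $\textqratu$. Choosing the opposite order would eliminate $z_1$ instead, which accounts for the ``either'' in the statement. The crucial structural observation is that in $\clausetQUParity(n)$ the two universal variables never occur independently: every clause containing $z_2$ also contains $z_1$, and every clause containing $\bar z_2$ also contains $\bar z_1$ (they appear only in the pairs $z_1 \vee z_2$ and $\bar z_1 \vee \bar z_2$).

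First I would show that every clause $C$ with $z_2 \in C$ has $\qrat$ on $z_2$. By Definition~\ref{def:orig:qrat} this requires that $\outerres(\prefixQUParity, C, D, z_2)$ has $\at$ for every $D \in \clausetQUParity$ with $\bar z_2 \in D$. Since $z_2 \in C$ forces $z_1 \in C$, the clause $C \setminus \{z_2\}$ still contains $z_1$; and since $\bar z_2 \in D$ forces $\bar z_1 \in D$ together with $z_1 \leq_\prefix z_2$, the outer clause $\outercl(\prefixQUParity, D, \bar z_2)$ contains $\bar z_1$. Hence the outer resolvent $(C \setminus \{z_2\}) \cup \outercl(\prefixQUParity, D, \bar z_2)$ contains both $z_1$ and $\bar z_1$ and is therefore tautological. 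A tautological clause $R$ trivially has $\at$ with respect to any CNF, because $\overline{R}$ supplies the two unit clauses $z_1$ and $\bar z_1$, from which UP derives $\emptyset$ immediately. The symmetric argument, interchanging the roles of $z_1, z_2$ with $\bar z_1, \bar z_2$, shows that every clause $C$ with $\bar z_2 \in C$ has $\qrat$ on $\bar z_2$.

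Next I would assemble these per-clause facts into a polynomial-time rewriting. The tautology argument depends only on the presence of $z_1$ (resp.\ $\bar z_1$) in $C$ and of $\bar z_1$ (resp.\ $z_1$) in the partner $D$, and is therefore insensitive to whether $z_2$ has already been deleted from other clauses. Consequently $\textqratu$ may be applied in any order to remove $z_2$ from all clauses in which it occurs positively and $\bar z_2$ from all clauses in which it occurs negatively, each step being sound by Theorem~\ref{thm:orig:qior:literal:soundness}. There are $O(n)$ clauses containing a $z_2$-literal, and for each one the $\qrat$ check inspects $O(n)$ partner clauses $D$, every individual outer-resolvent-tautology test being of constant size; the whole elimination thus runs in time polynomial in $n$. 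Once all $z_2$-literals are gone, $z_2$ no longer occurs and is dropped from the prefix.

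Finally I would verify that the result is exactly an LQParity formula: removing $z_2$ turns each pair $z_1 \vee z_2$ into $z_1$ and each $\bar z_1 \vee \bar z_2$ into $\bar z_1$, while leaving every other literal untouched, so that renaming $z_1$ to $z$ and $\forall z_1$ to $\forall z$ yields precisely the clauses and prefix described after Definition~\ref{def:lquplus:lower:bound}. The one point requiring care is that $z_2$ must be the $\leq_\prefix$-larger universal variable, which is exactly what guarantees that $\bar z_1$ (resp.\ $z_1$) lies in the relevant outer clause; this is the single place where the freedom in fixing the block ordering is used, and where the argument would break down for the $\leq_\prefix$-smaller variable. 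I expect this ordering bookkeeping, rather than any deep combinatorics, to be the main thing to get right.
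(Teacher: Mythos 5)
Your proposal is correct and follows essentially the same route as the paper's proof: every outer resolvent on a $z_2$-literal contains the complementary pair $z_1, \bar z_1$, so UP derives the empty clause immediately, the $\textqratu$ steps are polynomially many, and finally $z_1$ is renamed to $z$. The only (harmless) differences are that you make the intra-block ordering $z_1 \leq_\prefix z_2$ explicit---which the paper leaves implicit---and that you handle clauses containing $\bar z_2$ by the symmetric tautology argument, whereas the paper observes that once all positive $z_2$-literals are eliminated the $\qrat$ check on $\bar z_2$ holds vacuously for lack of resolution partners.
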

\begin{proof}
We eliminate $z_2$ ($z_1$ can be eliminated alternatively) in a
polynomial number of $\textqratu$ steps. Every clause $C$ with $z_2
\in C$ has $\qrat$ on $z_2$ since $\{z_1, \bar z_1\} \subseteq
\outerres$ for all outer resolvents $\outerres$.  UP
immediately detects a conflict when propagating
$\overline{\outerres}$. After eliminating all literals $z_2$, the
clauses containing $\bar z_2$ trivially have $\qrat$ on $\bar z_2$,
which can be eliminated. Finally, $z_1$ including all of its occurrences is renamed to $z$.
\qed
\end{proof}

In the proof above the universal literals can be eliminated by $\textqratu$ in any
order. Hence in this case the non-confluence~\cite{DBLP:conf/nfm/HeuleSB15,DBLP:conf/cade/Kiesl017} of 
rewrite rules in the $\qrat$ and
$\newqrat$ systems is not an issue. \lquplusres has polynomial proofs for LQParity
formulas~\cite{beyersdorff_et_al:LIPIcs:2015:4905}. Hence the combination
of $\textqratu$  and
\lquplusres results in a calculus that is more powerful than \lquplusres. 
A related result~\cite{DBLP:conf/sat/KieslHS17} was obtained for the
combination of $\textqratu$ and
the weaker QU-resolution calculus~\cite{DBLP:conf/cp/Gelder12}.


\section{Experiments} \label{sec:experiments}

We implemented $\newqrat$ redundancy
removal in a tool called \qrattool for QBF preprocessing.\footnote{Source
  code of \qrattool: \url{https://github.com/lonsing/qratpreplus}} It applies rules $\textnewqrate$
and $\textnewqratu$ to remove redundant clauses and universal literals. We did not
implement clause addition ($\textnewqrata$) or extended universal reduction
($\texteur$). \qrattool is the
\emph{first implementation} of $\newqrat$ and $\qrat$ for QBF
preprocessing. The preprocessors 
\hqspre~\cite{DBLP:conf/tacas/WimmerRM017} and 
\bloqqer~\cite{DBLP:journals/jar/HeuleSB17} (which  generates partial $\qrat$
proofs to trace preprocessing steps) 
do not apply $\qrat$ to eliminate redundancies. 
The following experiments were run on a cluster of Intel Xeon CPUs
(E5-2650v4, 2.20 GHz) running Ubuntu 16.04.1. We  used the benchmarks from the PCNF
track of the QBFEVAL'17 competition. In terms of scheduling the
non-confluent (cf.~\cite{DBLP:conf/nfm/HeuleSB15,DBLP:conf/cade/Kiesl017}) rewrite rules $\textnewqrate$
and $\textnewqratu$, we have not yet optimized \qrattool. Moreover, in general large numbers of
clauses  in  formulas may 
cause run time overhead. In this
respect, our \mbox{implementation leaves room for improvements.}

\begin{table}[t]
\caption{Solved instances (\emph{S}), solved unsatisfiable (\emph{$\bot$}) and
satisfiable ones (\emph{$\top$}), and total wall clock time in kiloseconds (K) including time
outs on instances from QBFEVAL'17. Different combinations of
preprocessing by \bloqqer, \hqspre, and our tool \qrattool.  }
\addtocounter{table}{-1}
\begin{minipage}[b]{0.499\textwidth}
\begin{center}
\subfloat[Original instances (no prepr.).]{
{\setlength\tabcolsep{0.15cm}
\begin{tabular}{l@{\quad}c@{\quad}r@{\quad}r@{\quad}c}
\hline
\emph{Solver} & \emph{S} & \multicolumn{1}{l}{\emph{$\bot$}} & \emph{$\top$} & \emph{Time} \\
\hline
\caqe   & 170 & 128 & 42 &  656K  \\
\rareqs & 167 & 133 & 34 &  660K  \\
\depqbf & 152 & 108 & 44 &   690K \\
\qute   & 130 &  91 & 39 &  720K  \\
\hline
\end{tabular}
}
\label{fig:exp:523:instances:noprepro}
}
\end{center}
\end{minipage}
\hfill
\begin{minipage}[b]{0.499\textwidth}
\begin{center}
\subfloat[Prepr.~by \qrattool only.]{
{\setlength\tabcolsep{0.15cm}
\begin{tabular}{l@{\quad}c@{\quad}r@{\quad}c@{\quad}r}
\hline
\emph{Solver} & \emph{S} & \multicolumn{1}{l}{\emph{$\bot$}} & \emph{$\top$} & \emph{Time} \\
\hline
\caqe   & 209 & 141 & 68 & 594K   \\
\rareqs & 203 & 152 & 51 & 599K  \\
\depqbf & 157 & 109 & 48 & 689K \\
\qute   & 131 &  98 & 33 & 724K   \\
\hline
\end{tabular}
}
\label{fig:exp:523:instances:prepro}
}
\end{center}
\end{minipage}

\bigskip

\begin{minipage}[b]{0.499\textwidth}
\begin{center}
\subfloat[Prepr.~by \bloqqer only.]{
{\setlength\tabcolsep{0.15cm}
\begin{tabular}{l@{\quad}c@{\quad}c@{\quad}c@{\quad}c}
\hline
\emph{Solver} & \emph{S} & \emph{$\bot$} & \emph{$\top$} & \emph{Time} \\
\hline
\rareqs & 256 & 180 & 76 &   508K \\
\caqe   & 251 & 168 & 83  &  522K \\
\depqbf & 187 & 121 & 66 &   630K \\
\qute   & 154 & 109 & 45 &   682K \\
\hline
\end{tabular}
}
\label{fig:exp:523:instances:prepro:bloqqer}
}
\end{center}
\end{minipage}
\hfill
\begin{minipage}[b]{0.499\textwidth}
\begin{center}
\subfloat[Prepr.~by \bloqqer and \qrattool.]{
{\setlength\tabcolsep{0.15cm}
\begin{tabular}{l@{\quad}c@{\quad}c@{\quad}c@{\quad}r}
\hline
\emph{Solver} & \emph{S} & \emph{$\bot$} & \emph{$\top$} & \emph{Time} \\
\hline
\rareqs & 262 & 178 & 84 &   492K \\
\caqe   & 255 & 172 & 83 &   507K \\
\depqbf & 193 & 127 & 66 &   622K \\
\qute   & 148 & 107 & 41 &   688K \\
\hline
\end{tabular}
}
\label{fig:exp:523:instances:prepro:bloqqer:qrat}
}
\end{center}
\end{minipage}

\bigskip

\begin{minipage}[b]{0.499\textwidth}
\begin{center}
\subfloat[Prepr.~by \hqspre only.]{
{\setlength\tabcolsep{0.15cm}
\begin{tabular}{l@{\quad}c@{\quad}c@{\quad}c@{\quad}c}
\hline
\emph{Solver} & \emph{S} & \emph{$\bot$} & \emph{$\top$} & \emph{Time} \\
\hline
\caqe   & 306 & 197 & 109 & 415K \\
\rareqs & 294 & 194 & 100 & 429K \\
\depqbf & 260 & 171 &  89 & 494K \\
\qute   & 255 & 171 &  84 & 497K \\
\hline
\end{tabular}
}
\label{fig:exp:523:instances:prepro:hqspre}
}
\end{center}
\end{minipage}
\hfill
\begin{minipage}[b]{0.499\textwidth}
\begin{center}
\subfloat[Prepr.~by \hqspre and \qrattool.]{
{\setlength\tabcolsep{0.15cm}
\begin{tabular}{l@{\quad}c@{\quad}c@{\quad}c@{\quad}r}
\hline
\emph{Solver} & \emph{S} & \emph{$\bot$} & \emph{$\top$} & \emph{Time} \\
\hline
\caqe   & 314 & 200 & 114 & 407K \\
\rareqs & 300 & 195 & 105 & 418K  \\
\depqbf & 262 & 177 &  85 & 488K \\
\qute   & 250 & 169 &  81 & 500K \\
\hline
\end{tabular}
}
\label{fig:exp:523:instances:prepro:hqspre:qrat}
}
\end{center}
\end{minipage}
\label{fig:exp:523:instances}
\refstepcounter{table}
\end{table}
We illustrate the impact of QBF preprocessing by $\newqrat$ and $\qrat$ on
the performance of QBF solving. To this end, we
applied \qrattool in addition to the state of the art QBF preprocessors
\bloqqer and \hqspre. In the experiments,
first we preprocessed the benchmarks using \bloqqer and \hqspre, respectively, with a
generous limit of two hours wall clock time. We considered 39 and 42~formulas where
\bloqqer and \hqspre timed out, respectively, in their original form. 
Then we applied \qrattool to the preprocessed formulas with a \emph{soft wall
  clock time
  limit} of 600 seconds. When \qrattool reaches the limit, it
prints the formula with redundancies removed that
have been detected so far. These preprocessed formulas are then solved.  
Table~\ref{fig:exp:523:instances} shows the performance
of our solver \depqbf~\cite{DBLP:conf/cade/LonsingE17} in addition to the
top-performing solvers\footnote{We excluded the top-performing solver
\aigsolve due to observed assertion failures.} \rareqs~\cite{Janota20161},
\caqe~\cite{DBLP:conf/fmcad/RabeT15}, and \qute~\cite{DBLP:conf/sat/PeitlSS17}
from QBFEVAL'17, using limits of 7~GB and 1800~seconds wall clock time. The 
solvers implement different solving paradigms such as expansion or
resolution-based QCDCL. The
results clearly indicate the benefits of preprocessing by \qrattool. The
number of solved instances increases. \qute is an exception to this trend. We conjecture
that \qrattool blurs the formula structure in addition to
\bloqqer and \hqspre, which may be harmful to \qute.

We emphasize that we hardly observed a
difference in the effectiveness of redundancy removal by $\newqrat$ and $\qrat$ on the considered
benchmarks. The benefits of \qrattool shown in
Table~\ref{fig:exp:523:instances} are due to redundancy removal by $\qrat$
already, and not by $\newqrat$. However, on additional 672 instances from class
\emph{Gent-Rowley} (encodings of the Connect Four game) available from QBFLIB, 
$\textnewqrate$ on average removed 54\% more clauses than $\textqrate$. We
attribute this effect to larger numbers of quantifier blocks in the
Gent-Rowley instances (median 73, average 79) compared to QBFEVAL'17 (median
3, average 27). The advantage of QBF abstractions in the $\newqrat$ system 
is more pronounced on instances with many quantifier blocks. 


\section{Conclusion}
\label{sec:conclusion}

We presented $\newqrat$, a generalization of the $\qrat$ proof system,
that is based on a more powerful QBF redundancy property. The key
difference between the two systems is the use of QBF specific unit
propagation in contrast to propositional unit propagation. Due to
this, redundancy checking in $\newqrat$ is aware of quantifier
structures in QBFs, as opposed to $\qrat$. Propagation in $\newqrat$
potentially benefits from the presence of universal variables in the
underlying formula. This is exploited by the use of abstractions of
QBFs, for which we developed a theoretical framework, and from which
the soundness of $\newqrat$ follows.  By concrete classes of QBFs we
demonstrated that $\newqrat$ is more powerful than $\qrat$ in terms of
redundancy detection. Additionally, we reported on proof theoretical
improvements of a certain resolution based QBF calculus made by
$\qrat$ (or $\newqrat$) redundancy removal. A first experimental
evaluation illustrated the potential of redundancy elimination by
$\newqrat$.

As future work, we plan to implement a workflow for checking
$\newqrat$ proofs and extracting Skolem functions similar to $\qrat$
proofs~\cite{DBLP:journals/jar/HeuleSB17}. In our $\newqrat$
preprocessor \qrattool we currently do not apply a specific strategy
to handle the non-confluence of rewrite rules. We want to further
analyze the effects of non-confluence as it may have an impact on the
amount of redundancy detected. In our tool \qrattool we considered
only redundancy removal. However, to get closer to the the full power
of the $\newqrat$ system, it may be beneficial to also add redundant
clauses or universal literals to a formula.




\clearpage

\begin{appendix}

\section{Appendix}

\subsection{Example: Formula $\FormulasL(1)$}

The following example shows formula $\FormulasL(1)$ from the class $\FormulasL(n)$, which illustrates that $\textnewqratu$ is more powerful than $\textqratu$.

\begin{example}[related to Definition~\ref{def:formula:class:literal:elimination} on page~\pageref{def:formula:class:literal:elimination}]
For $n := 1$, we have $\FormulasL(n)$ with prefix $\prefixL(n) := \forall u_{1}, u_{2} \exists x_{1}, x_{2} \forall u_{3} \exists x_{3}$ and CNF  $\clausetL(n) := \mathcal{C}(0)$ with $\mathcal{C}(0)
:= \bigwedge_{j := 0}^{7} C_{0,j}$ as follows.

\hspace*{-0.5cm}
\begin{minipage}[t]{0.55\textwidth}

\begin{tabular}{ll}
$C_{0,0} := $ & $(\neg u_{2} \vee \neg x_{1} \vee \neg x_{2})$ \\
$C_{0,1} := $ & $(\neg u_{1} \vee \neg x_{1} \vee x_{2}) $ \\
$C_{0,2} := $ & $(u_{1} \vee x_{1} \vee \neg x_{2}) $ \\
$C_{0,3} := $ & $(u_{2} \vee x_{1} \vee x_{2})  $  \\
\end{tabular}
\end{minipage}
\begin{minipage}[t]{0.425\textwidth}

\begin{tabular}{ll}
$C_{0,4} := $ & $(\neg x_{1} \vee \neg x_{2} \vee x_{3}) $  \\
$C_{0,5} := $ & $(u_{3} \vee \neg x_{3}) $ \\
$C_{0,6} := $ & $(\neg x_{1} \vee x_{2} \vee  \neg x_{3})$ \\
$C_{0,7} := $ & $(\neg u_{3} \vee x_{3})$ \\
\end{tabular}
\end{minipage}

\end{example}

\newtheorem{innercustomlem}{Lemma}
\newenvironment{customlem}[1]
  {\renewcommand\theinnercustomlem{#1}\innercustomlem}
  {\endinnercustomlem}

  \newtheorem{innercustomthm}{Theorem}
\newenvironment{customthm}[1]
  {\renewcommand\theinnercustomthm{#1}\innercustomthm}
  {\endinnercustomthm}

    \newtheorem{innercustomprop}{Proposition}
\newenvironment{customprop}[1]
  {\renewcommand\theinnercustomprop{#1}\innercustomprop}
  {\endinnercustomprop}

\subsection{Proofs}

\noindent\textbf{Proof of Lemma~\ref{lem:outerres:satisfied} on page~\pageref{lem:outerres:satisfied}:}

\begin{customlem}{5}
  Given a clause $C$ with $\newqior$  with respect to QBF
$\qclauset := \prefix. \clauset$ on literal $l \in C$ with $\var{l} = x$. Let 
$T$ be a model of $\qclauset$ and  $\tau \subseteq T$ be a path in $T$. If
$\tau(C \setminus \{l\}) =
\bot$ then $\tau^x(D) = \top$ for all $D \in \clauset$ \mbox{with $\bar l \in D$.}
\end{customlem}

\begin{proof}[similar to proof of Lemma~6 in~\cite{DBLP:journals/jar/HeuleSB17}]
Let $D \in \clauset$ be a clause with $\bar l \in D$ and consider $R :=
\outerres(\prefix,C,D,l) = (C \setminus \{l\}) \cup \outercl(\prefix, D, \bar
l)$.  By Definition~\ref{def:new:qior}, we have $\prefix. \clauset
\treemodelequiv \prefix. (\clauset \wedge \outerres(\prefix, C, D, l))$ for
all $D \in \clauset$ with $\bar l \in D$.  Let $T$ be a model of
$\prefix. \clauset$ and $\tau \subseteq T$ a path in $T$. Since $T \qbfmodels
\prefix. \clauset$ and $T \qbfmodels \prefix. (\clauset \wedge \outerres(\prefix, C, D,
l))$, we have $\tau(\clauset) = \top$ and $\tau(R) = \top$.  Assuming that
$\tau(C \setminus \{l\}) = \bot$, we have 
$\tau(\outercl(\prefix, D, \bar l)) = \top$ since $\tau(R) = \top$. The clause
$\outercl(\prefix, D, \bar l)$ does not contain $\bar l$ and it contains only
literals of variables from blocks smaller than or equal to the block
containing $x$. Hence we have $\tau^x(\outercl(\prefix, D, \bar l)) = \top$
for the prefix assignment $\tau^x$, and further $\tau^x(D) = \top$ since
$\outercl(\prefix, D, \bar l) \subseteq D$.
\qed
\end{proof}

\noindent\textbf{Proof of Theorem~\ref{thm:newqior:clause:soundness} on page~\pageref{thm:newqior:clause:soundness}:}

\begin{customthm}{3}
  Given a QBF $\qclauset := \prefix. \clauset$ and a clause $C \in \clauset$ with $\newqior$ on
an \emph{existential} literal $l \in C$ with respect to QBF
$\qclauset' := \prefix'. \clauset'$ where $\clauset' := \clauset \setminus \{C\}$ and
$\prefix'$ is the same as $\prefix$ with variables and respective quantifiers
removed that no longer appear in $\clauset'$. Then $\qclauset \satequiv
\qclauset'$. 
\end{customthm}

\begin{proof}[similar to proof of Theorem~7
    in~\cite{DBLP:journals/jar/HeuleSB17}]
We can adapt the prefix $\prefix'$ of $\qclauset'$ to be the same as the prefix of $\qclauset$ in a satisfiability-preserving way. 
  If $\qclauset$ is satisfiable then
$\qclauset'$ is also satisfiable since every model of $\qclauset$ is also a
model of $\qclauset'$.  Let $T'$ be a model of $\qclauset'$ and $T^P := T'$ 
a pre-model of $\qclauset$. Consider paths $\tau \subseteq T^P$ in $T^P$ for
which we have $\tau(\clauset') = \top$ but $\tau(C) = \bot$, where $\tau =
\tau^x\bar l\tau_x$ for $\var{l} = x$ and $l \in C$. Since $\tau(C) = \bot$ 
also $\tau(C \setminus \{l\}) = \bot$, and due to
Lemma~\ref{lem:outerres:satisfied} we have $\tau^x(D) = \top$ for all $D \in
\clauset$ with $\bar l \in D$.  We construct a pre-model $T$ of $\qclauset$
from $T^P$ by modifying \emph{all} such paths $\tau \subseteq T^P$ by flipping the
assignment of $x$ to obtain $\tau' := \tau^xl\tau_x$ such that $\tau' \subseteq
T$.
(If we process multiple redundant clauses $C$, then cyclic
modifications by assignment flipping cannot occur if we do the
modifications in reverse chronological ordering as the clauses were
detected redundant. This is the same principle of reconstructing
solutions when using blocked clause elimination, for example.)
Now $\tau'(C) = \top$ and also $\tau'(D) = \top$ since
$\tau^x(D) = \top$, and $\tau$ and $\tau'$ have the same prefix assignment
$\tau^x$. Hence $T \qbfmodels \qclauset$ and thus $\qclauset$ is satisfiable.
\qed
\end{proof}

\noindent\textbf{Proof of Theorem~\ref{thm:newqior:literal:soundness} on page~\pageref{thm:newqior:literal:soundness}:}

\begin{customthm}{4}
  Given a QBF $\qclauset_0 := \prefix. \clauset$ and $\qclauset :=
\prefix. (\clauset \cup \{C\})$ where $C$ has $\newqior$ on a \emph{universal}
literal $l \in C$ with respect to $\qclauset_0$. Let $\qclauset' :=
\prefix. (\clauset \cup \{C'\})$ with $C' := C \setminus \{l\}$. Then $\qclauset
\satequiv \qclauset'$.

\end{customthm}

\begin{proof}[similar to proof of Theorem~8
in~\cite{DBLP:journals/jar/HeuleSB17}] If $\qclauset'$ is satisfiable then
$\qclauset$ is also satisfiable since every model of $\qclauset'$ is also a
model of $\qclauset$.  Let $T$ be a model of $\qclauset$ and $T^P := T$ be a
pre-model of $\qclauset'$. Consider paths $\tau \subseteq T^P$ in $T^P$ for
which we have $\tau(\clauset) = \top$ and $\tau(C) = \top$ but $\tau(C') =
\bot$. Since $C' = C \setminus \{l\}$, we have $\tau = \tau^xl\tau_x$ for
$\var{l} = x$ and $l \in C$.  Since $l$ is universal, for every such $\tau$
there exists a path $\tau' \subseteq T^P$ with $\tau' = \tau^x\bar l\rho_x$,
with $\tau_x$ and $\rho_x$ being different suffix assignments of $\tau$ and
$\tau'$, respectively. We have $\tau'(\clauset) = \top$ and $\tau'(C) = \top$
since $\tau' \subseteq T$ because $T^P = T$, and also $\tau'(C') = \top$
because $l \in C$ but $\bar l \in \tau'$. Hence $C'$ is satisfied by $\tau'$
due to some assignment $k \in \rho_x$.  Due to $\tau(C') = \tau(C \setminus
\{l\}) = \bot$ and Lemma~\ref{lem:outerres:satisfied} we have $\tau^x(D) =
\top$ for all $D \in \clauset$ with $\bar l \in D$ and hence also $\tau'(D) =
\top$ since $\tau$ and $\tau'$ have the same prefix assignment $\tau^x$.  We
construct a pre-model $T'$ of $\qclauset'$ from $T^P$ by modifying \emph{all}
paths $\tau = \tau^xl\tau_x$ for which $\tau(C') = \bot$ to be $\tau'' :=
\tau^xl\rho_x$, where $\rho_x$ is the suffix assignment of path $\tau' =
\tau^x\bar l\rho_x$ that corresponds to the other branch $\bar l$ of the
universal literal $l$. These modifications in fact are a replacement of the
subtree under $\tau^xl$.
(As noted in the proof of Theorem~\ref{thm:newqior:clause:soundness}
above, cyclic modifications cannot occur if we process multiple
redundant clauses $C$, provided that we do the modifications in
reverse chronological ordering as the clauses were detected
redundant.)
We have $\tau''(C') = \top$ due to its suffix
assignment $\rho_x$, and also $\tau''(\clauset) = \top$. Therefore, $T'
\qbfmodels \qclauset'$ and hence $\qclauset'$ is satisfiable.  \qed
\end{proof}

\noindent\textbf{Extended Proof Sketch of Proposition~\ref{prop:stronger:qratu} on page~\pageref{prop:stronger:qratu}:}

\begin{customprop}{5}
  For $n \geq 1$, $\textnewqratu$ can eliminate the entire quantifier block $\forall B_1$ in $\FormulasL(n)$
whereas $\textqratu$ cannot eliminate any universal literals in $\FormulasL(n)$.
\end{customprop}

\begin{proof}[sketch]
Formulas $\FormulasL(n)$ are constructed based on a similar principle
as $\FormulasC(n)$ in Definition~\ref{def:formula:class}.  For $i$ and
$k$ with $i \not = k$, the sets of variables in $\mathcal{C}(i)$ and
$\mathcal{C}(k)$ are disjoint.  Thus it suffices to prove the claim
for an arbitrary $\mathcal{C}(i)$.  Clause $C_{i,0}$ has $\newqrat$ on
literal $\neg u_{3i+2}$. The relevant outer resolvent is
$\outerres_{0,3} = \outerres(\prefixL(n), C_{i,0}, C_{i,3}, \neg
u_{3i+2}) = (\neg x_{3i+1} \vee \neg x_{3i+2})$. We have
$\max(\levels(\outerres_{0,3})) = 2$, and variable $u_{3i + 3}$ is
universal in $\eabs(\FormulasL(n), 2)$. Propagating
$\overline{\outerres_{0,3}}$ makes $C_{i,4}$ unit, finally $C_{i,5}$
becomes empty under the derived assignment $x_{3i + 3}$ and since UR
reduces the literal $u_{3i + 3}$. After removing literal $\neg
u_{3i+2}$ from $C_{i,0}$, clause $C_{i,3}$ trivially has $\newqrat$ on
$u_{3i+2}$. The literals of variable $u_{3i+1}$ in $C_{i,1}$ and
$C_{i,2}$ can be eliminated in a similar way, where clause $C_{i,7}$
becomes empty by UR in QUP. 
In contrast to $\textnewqratu$, $\textqratu$ cannot eliminate any
universal literals in $\FormulasL(n)$.  Clauses $C_{i,5}$ and
$C_{i,7}$ in $\FormulasL(n)$ do not become empty. All variables are
existential since UP is applied.  The claim can be proved by case
analysis of all possible outer resolvents.
\qed
\end{proof}


\end{appendix}

\end{document}